\theoremstyle{plain}
\newtheorem{theorem}{Theorem}
\newtheorem{prop}[theorem]{Proposition}
\newtheorem{lemma}[theorem]{Lemma}
\newtheorem{cor}[theorem]{Corollary}
\theoremstyle{definition}
\newtheorem{defi}{Definition}
\newcommand{\inlinesection}[1]{\textbf{{#1}.}}
\newcommand{\nnline}{\vspace{5mm}}
\def \A {\mathbf{A}}
\def \SSigma {\mathbf{\Sigma}}
\def \PPi {\mathbf{\Pi}}
\def \NN {\mathcal{N}}
\def \p {\mathbf{p}}
\def \LLambda {\mathbf{\Lambda}}
\def \y {\mathbf{y}}
\def \u {\mathbf{u}}
\def \GGamma {\mathbf{\Gamma}}
\def \grad {\nabla}
\def \X {\mathbf{X}}
\def \p {\mathbf{p}}
\def \Z {\mathbf{Z}}
\def \x {\mathbf{x}}
\def \R {\mathbb{R}}
\def \E {\mathbb{E}\hspace{0.5mm}}
\def \B {\mathbf{B}}
\def \C {\mathbf{C}}
\def \I {\mathbf{I}}
\def \U {\mathbf{U}}
\def \Var {\text{Var}}
\def \Cov {\text{Cov}}
\def \S {\mathbf{S}}
\def \tr {\text{tr}\hspace{0.5mm}}
\def \LLambda {\mathbf{\Lambda}}
\newcommand{\argmin}[1]{\underset{#1}{\arg\min}\hspace{2mm}}
\def \tworeg {\hat\beta_{\text{2REG}}}
\def \ols {\hat\beta_{\text{OLS}}}
\begin{document}

%\maketitle
%\thispagestyle{empty}

\title{Two-Stage Regularization of Pseudo-Likelihood Estimators with Application to Time Series}
\author{Erez Buchweitz\footnote{Corresponding author; erezmb@gmail.com.}, Shlomo Ahal, Oded Papish, Guy Adini}
\date{Istra Research Ltd.}
\setlength{\parindent}{0cm}

\maketitle
\thispagestyle{empty}

\section*{Abstract}

Estimators derived from score functions that are not the likelihood are in wide use in practical and modern applications. Their regularization is often carried by pseudo-posterior estimation, equivalently by adding penalty to the score function. We argue that this approach is suboptimal, and propose a two-staged alternative involving estimation of a new score function which better approximates the true likelihood for the purpose of regularization. Our approach typically identifies with maximum a-posteriori estimation if the original score function is in fact the likelihood.
% standard and straightforward to fit
We apply our theory to fitting ordinary least squares (OLS) under contemporaneous exogeneity, a setting appearing often in time series and in which OLS is the estimator of choice by practitioners.

\newpage
%\maketitle

%\tableofcontents
%\thispagestyle{empty}
%\newpage

%\setcounter{page}{1}

\section{Introduction}

Statistical methodology favors obtaining parametric estimators by maximizing the likelihood function, yet in many cases this is intractable due to practical concerns, and pseudo-likelihood score functions are used \cite{ventura}. Regularization of maximum pseudo-likelihood (pseudo-ML) estimators is often carried by combining the pseudo-likelihood with a prior \cite{sisson, ventura}, or equivalently by adding penalty to the score function, e.g.\ $L^1,L^2$ regularization in neural networks \cite{goodfellow} and in gradient boosting machines \cite{gbm}, and soft-margin support vector machines \cite{elements}. We argue that when dealing with pseudo-ML estimators, this approach of regularization is suboptimal. Instead, we suggest to combine the prior with a different pseudo-likelihood function than that which was used for obtaining the unregularized estimator. We call our approach two-stage regularization (2REG).\nnline

Let $\X$ be a random sample drawn from a distribution parameterized by $\beta$, and suppose $\hat\beta(\X)$ is an estimator of $\beta$. A 2REG estimator is of the form
\begin{align}\label{intro_2reg}
    \tworeg \ = \ {\arg\max}_{\beta} \big\{\ p(\hat\beta|\beta) \cdot p(\beta) \ \big\}
\end{align}
where $p(\hat\beta|\beta)$ typically is the density of a normal distribution $\hat\beta|\beta \sim \NN(\beta, \hat\C)$. The density $p(\hat\beta|\beta)$ acts as a pseudo-likelihood function, and $\hat\C$ is an estimator of the covariance matrix $\C=\Cov(\hat\beta)$. The density of the prior distribution is denoted $p(\beta)$. We omit from the notation the conditioning on appropriate ancillary statistics $\cite{lehmann}$. The justification for this definition is given here in short, and is elaborated upon later. Equation (\ref{intro_2reg}) applies regularization to $\hat\beta$ by combining its distribution, rather than the likelihood, with a prior. For typical maximum likelihood (ML) estimators, this approach identifies with maximum a-posteriori (MAP) estimation. The distribution of a pseudo-ML estimator, preferably consistent and asymptotically normal \cite{ventura}, is often tractable even if the likelihood function is not. This provides for a pseudo-likelihood function that gives an improved approximation of the true likelihood, specialized for the purpose of regularization of $\hat\beta$. We show that the estimator in equation (\ref{intro_2reg}) upholds some desirable properties, and is furthermore standard and straightforward to compute. Indeed, the covariance estimate $\hat\C$ can be obtained by bootstrap (or otherwise), and the optimization in equation (\ref{intro_2reg}) reduces in the general case to an ordinary least squares (OLS) fit. The theory of 2REG regularization is closely related to the synthetic likelihood approach of Wood \cite{wood} (see also \cite{drovandi}) and to the body of work originating from Newey and West \cite{newey_west} in econometrics (see also \cite{zeileis}). 
\nnline

We accompany our discussion of 2REG regularization throughout the paper with an application to fitting linear models in contemporaneous exogeneity setting \cite{greene}. This example is analytically tractable for its better part, and we explain in rigorous terms why an inefficient OLS estimator is preferred over more evident approximations of the true likelihood; what its caveats are and how 2REG compensates for them, whereas standard regularization techniques do not; how to obtain the covariance estimator $\hat\C$. The application to contemporaneous exogeneity bears additional independent interest, as this setting is prevalent in time series problems, and OLS is widely used by practitioners whenever strict exogeneity \cite{greene} is not granted \cite{angrist, cameron, wooldridge, zeileis}.
\nnline

An estimate of the covariance matrix $\C$ is required for computation of the estimator in equation (\ref{intro_2reg}), and we emphasize that the benefit in using 2REG regularization is mitigated by the estimation error of the covariance. As a generic and robust approach, we suggest to estimate $\C$ by the sample covariance generated by plug-in bootstrap estimates. Altenatively, for a given problem one may devise a specialized covariance estimator; consider for example the semiparametric family of HAC estimators of the OLS covariance \cite{zeileis}. Having obtained a crude estimate, it is well established in the literature that covariance estimators require regularization (see \cite{warton} and the references therein). We use a prior covariance matrix based on the original pseudo-likelihood function, and follow the convex combination approach laid by Ledoit and Wolf \cite{ledoit_wolf} and Warton \cite{warton} with some changes. A final act of normalization is then applied to settle a three-step method of estimating the covariance matrix, comprising estimation, regularization and normalization.\nnline

The remainder of this paper is organized into three major parts. In section \ref{section_time_series} we introduce the time series problem of fitting a linear model under contemporaneous exogeneity, which will guide us throughout the paper as the leading example. In section \ref{section_2reg} we introduce 2REG regularization, we apply it to the time series problem and derive its properties in general setting. In section \ref{section_cov} we discuss estimation of the covariance matrix required for computation of the 2REG estimator. In the remaining sections \ref{section_sim} and \ref{section_real_data} we demonstrate the performance of 2REG in a suite of simulation studies, and on a real-world data set of NYSE stock prices.

\section{Time Series Problem}\label{section_time_series}

Suppose $\y=\X\beta+\varepsilon$, where $\X$ is a $n\times p$ random matrix of rank $p$ of covariate observations, $\beta\in\R^p$ is fixed, and $\varepsilon,\y\in\R^n$ are random vectors. We treat any vector as a column vector. Under \textit{contemporaneous exogeneity} \cite{greene, wooldridge}, we assume the residual noise $\varepsilon$ upholds
\begin{align}\label{section_time_series_contemporaneous}
\E[\varepsilon_i|\X_{i1},..,\X_{ip}] \ = \ 0
\end{align}
for all $i=1,..,n$. Here, the noise is required to have mean zero conditioned upon the contemporaneously sampled covariates. This relaxes the stronger assumption of \textit{strict exogeneity}, i.e.\  $\E[\varepsilon|\X]=0$, by permitting covariates to contain information on the value of $\varepsilon$ between observations. Consider, for example, time shifts of $\y$ as covariates \cite{greene}. The OLS estimator $\ols  =  (\X^T\X)^{-1}\X^T\y$
is obtained by maximizing the pseudo-likelihood which assumes 
\begin{align}\label{section_time_series_pseudo_ols}
\varepsilon|\X\  \sim\  \NN(0,\sigma^2\I)
\end{align}
for some $\sigma>0$, where $\I$ is the identity matrix in appropriate dimension. The OLS pseudo-likelihood misspecifies the true likelihood in at least two meaningful respects. First, it asserts strict exogeneity $\E[\varepsilon|\X]=0$, a strong assumption we explicitly circumvented. The alternative requires modeling the elusive distribution of $\X$ into the likelihood. Second, it assumes the noise observations are uncorrelated and homoscedastic.\nnline

Despite its misgivings, a simple argument shows that an OLS estimator is consistent in this setting, under mild assumptions. Indeed, $\ols=\beta+(\X^T\X)^{-1}\X^T\varepsilon$ while contemporaneous exogeneity (\ref{section_time_series_contemporaneous}) entails that $\X^T\varepsilon/n$ is vanishing as $n\to\infty$, from which consistency follows \cite{wooldridge}. Not so, if we were to take into account correlation and heteroscedasticity within $\varepsilon$. Indeed, if $\varepsilon|\X\sim\NN(0,\sigma^2\SSigma)$ for some covariance matrix $\SSigma$, the maximum-likelihood estimator is generalized least squares (GLS) \cite{agresti}, given by applying a linear transformation to unravel the autocorrelation structure, then solving using OLS;
\begin{align}\label{section_time_series_transform}
\S\y \ = \ \S\X\beta + \S\varepsilon
\end{align}
where $\S$ is a $n\times n$ matrix such that $\Cov(\S\varepsilon)=\sigma^2\I$. A non-diagonal $\S$ does not in general preserve contemporaneous exogeneity, which is required to guarantee consistency of the subsequent OLS applied to the transformed model equations (\ref{section_time_series_transform}). Thus, consistency of GLS is not granted and as result many authors and practitioners refrain from using it (e.g.\ \cite{angrist, cameron, wooldridge}), opting instead for the consistent OLS estimator. The popularity of Newey-West-type standard errors \cite{zeileis} in econometrics is a further testament to this. We make two further notes. First, a non-diagonal $\S$ is mandated by correlation within $\varepsilon$, a setting appearing frequently in time series where one is resorted to modeling multiple steps into the future, causing overlap and correlation within $\varepsilon$, or facing compounded misspecification error \cite{jorda}. Second, even if the unconditional distribution $\varepsilon\sim(0,\SSigma)$ is known, the more intricate conditional distribution $\varepsilon|\X$ might be very different.\nnline

Although robust, the OLS pseudo-likelihood can be shown to overlook certain aspects that might be prevalent in the data and affect the fit. For instance, covariates with correlated observations, or with random effects endure increased variance of fit. Since covariates within the same model differ from each other in these respects, the produced fit might be adversely dominated by high-variance, uncertain elements. The problem is furthermore aggravated in that the difficulty in learning using heavily autocorrelated or random-effect covariates might in practice deter us from including such covariates in our model, or unwittingly fail if we nevertheless try. We demonstrate via examples.
\nnline

We first state a lemma relating the asymptotic variance of the OLS estimator to the joint covariance structure of a univariate $\X$ and $\varepsilon$. The conditions for lemma \ref{section_time_series_lemma} are cumbersome but mild and amount to well-behavior of the data as the sample size $n$ increases (see \cite{wooldridge}).

\begin{lemma}\label{section_time_series_lemma}
    Let $\{\X_n\}_{n=1}^\infty$ be a sequence of random variables with mean zero and uniformly bounded fourth moments, and let  $\{\varepsilon_n\}_{n=1}^\infty$ be another sequence of random variables with mean zero conditioned upon any finite subset of $\{\X_n\}_{n=1}^\infty$ and with finite variances.
    Denote by $\{\SSigma_n\}_{n=1}^\infty$ the conditional finite-sample noise covariance matrices, i.e.\ $(\SSigma_n)_{ij}=\Cov(\varepsilon_i,\varepsilon_j|\X_1,..,\X_n)$ for $i,j=1,..,n$.
    Assume that:
    \begin{enumerate}
        \item $(\X_1^2+...+\X_n^2)/n\to 1$ in probability, as $n\to\infty$.
        \item $\{\sum_{i,j=1}^n(\SSigma_n)_{ij}\X_i\X_j/n\}_{n=1}^\infty$ converges in probability to a constant, as $n\to\infty$.
        \item $\{\SSigma_n/n\}_{n=1}^\infty$ are uniformly bounded in operator norm.
        \item $\big\{n(\X_1^2+...+\X_n^2)^{-1}\big\}_{n=1}^\infty$ are uniformly integrable for large enough $n$.
    \end{enumerate}
    
    Suppose $\y_n  =  \X_n\beta+\varepsilon_n$ for all $n$, where  $\beta\in\R$ is fixed, and denote the OLS estimator for the finite sample of size $n$ by  $\hat\beta_{n}=(\sum_{i=1}^n\X_i^2)^{-1}(\sum_{i=1}^n\X_i\y_i)$. Then, 
    $$
    \lim_{n\to\infty} n\Var(\hat\beta_n) \ = \ \lim_{n\to\infty} \frac{1}{n}\sum_{i,j=1}^n \Cov(\X_i\varepsilon_i,\ \X_j\varepsilon_j).
    $$
    provided that the limit on the right-hand side exists.
\end{lemma}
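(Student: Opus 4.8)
The plan is to reduce the statement to an interchange of limit and expectation and to isolate the main technical burden as a uniform-integrability estimate.

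First I would write the OLS error explicitly. Setting $S_n=\sum_{i=1}^n\X_i^2$ and $T_n=\sum_{i=1}^n\X_i\varepsilon_i$, we have $\hat\beta_n-\beta=T_n/S_n$. Because each $\varepsilon_i$ has mean zero conditioned on $\X_1,\dots,\X_n$, conditioning on the covariates gives $\E[T_n\mid\X_1,\dots,\X_n]=0$, and since $S_n$ is measurable with respect to the covariates, $\hat\beta_n$ is unbiased and $\Var(\hat\beta_n)=\E[T_n^2/S_n^2]$. Writing $\x=(\X_1,\dots,\X_n)^T$ and using $\E[\varepsilon_i\varepsilon_j\mid\X_1,\dots,\X_n]=(\SSigma_n)_{ij}$, the inner conditional expectation is exact, $\E[T_n^2\mid\X_1,\dots,\X_n]=\x^T\SSigma_n\x=\sum_{i,j}(\SSigma_n)_{ij}\X_i\X_j$, so that
\begin{align}
n\Var(\hat\beta_n)=\E\!\left[\frac{n\,\x^T\SSigma_n\x}{S_n^2}\right].
\end{align}
On the other side, the same identity together with $\E[\X_i\varepsilon_i]=\E[\X_i\E[\varepsilon_i\mid\X_1,\dots,\X_n]]=0$ yields $\Cov(\X_i\varepsilon_i,\X_j\varepsilon_j)=\E[(\SSigma_n)_{ij}\X_i\X_j]$, whence the right-hand side of the lemma equals $\lim_n\E[\x^T\SSigma_n\x/n]$. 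It therefore suffices to show $\E[n\,\x^T\SSigma_n\x/S_n^2]-\E[\x^T\SSigma_n\x/n]\to 0$.

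Introduce $A_n=\x^T\SSigma_n\x/n$ and $B_n=n/S_n$ (note $A_n\ge 0$, since $\SSigma_n$ is a covariance matrix), so that the two quantities are $\E[A_nB_n^2]$ and $\E[A_n]$ and the target is $\E[A_n(B_n^2-1)]\to 0$. Assumption 1 gives $B_n\to 1$ and assumption 2 gives $A_n\to c$ in probability, so $A_n(B_n^2-1)\to 0$ in probability; the whole content of the lemma is upgrading this to convergence of the expectations. I would do so by truncating on $G_n=\{|S_n/n-1|\le\delta\}$. On $G_n$ the factor $|B_n^2-1|$ is bounded by a deterministic $\rho(\delta)\to 0$, so $\E[A_n|B_n^2-1|\,\II_{G_n}]\le\rho(\delta)\,\E[A_n]$, which is harmless because $\E[A_n]$ converges to the assumed right-hand-side limit and is thus bounded.

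The main obstacle is the complementary event $G_n^c$. Here I would bound $A_n=\x^T(\SSigma_n/n)\x$ via the operator-norm control of assumption 3, tame the surviving factor $n/S_n=B_n$ through the uniform integrability supplied by assumption 4, and use the uniformly bounded fourth moments to furnish a Chebyshev rate for $\PP(G_n^c)=\PP(|S_n/n-1|>\delta)$. The delicate point, and where essentially all the work lies, is that the crude operator-norm bound on $\x^T\SSigma_n\x$ overshoots the true order of magnitude: assumption 2 encodes the cancellation that keeps $\x^T\SSigma_n\x$ of order $n$ rather than $n^2$, so the argument must weave together the tail decay of $\PP(G_n^c)$ and the uniform integrability of $n/S_n$ carefully enough that the surplus factor does not defeat the smallness of $\PP(G_n^c)$, leaving the contribution on $G_n^c$ vanishing. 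Once that is established, sending $n\to\infty$ and then $\delta\to 0$ gives $\E[A_n(B_n^2-1)]\to 0$, which is the assertion.
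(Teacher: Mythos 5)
Your reduction coincides with the paper's: the identity $n\Var(\hat\beta_n)=\E\big[(\x^T\SSigma_n\x/n)\,/\,(S_n/n)^2\big]$ and the identification of the right-hand side with $\lim_n\E[\x^T\SSigma_n\x]/n$ are exactly the first half of the paper's argument, and your good-event estimate $\E[A_n|B_n^2-1|\II_{G_n}]\le\rho(\delta)\E[A_n]$ is sound. But the proof stops where the work starts: the contribution of $G_n^c$ is announced as a careful "weaving" of a Chebyshev rate, an operator-norm bound and uniform integrability, without any actual estimate being carried out, and your own closing sentences concede that you have not verified that the surplus factor is beaten by the smallness of $\PP(G_n^c)$. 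Since this interchange of limit and expectation is the entire analytic content of the lemma, that is a genuine gap. Worse, if you bound $\x^T\SSigma_n\x$ using assumption 3 exactly as stated (i.e.\ $\|\SSigma_n/n\|\le M$, hence $\x^T\SSigma_n\x\le MnS_n$ and $A_nB_n^2\le Mn\cdot(n/S_n)$), the extra factor of $n$ cannot be absorbed by $\PP(G_n^c)\to0$ together with uniform integrability of $n/S_n$, so the route you sketch does not close on that bound.

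What the paper actually does is a single domination with no case analysis: it invokes an operator-norm bound on $\SSigma_n$ itself, $\x^T\SSigma_n\x\le M\x^T\x=MS_n$ (this is what its proof uses and what the autoregressive examples satisfy, and is strictly stronger than the stated assumption 3), which gives $0\le A_nB_n^2\le M\,n/S_n$. This envelope is uniformly integrable by assumption 4; combined with convergence in probability of $A_nB_n^2$ to $\lim_nV_n/n$ (assumptions 1 and 2, with uniform integrability of $\X^T\X/n$ — a consequence of the bounded fourth moments — used to identify the constant in assumption 2 as $\lim_n\E[\x^T\SSigma_n\x]/n$), convergence in probability under a uniformly integrable dominating sequence yields convergence of the expectations. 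If you prefer to keep your truncation, the same bound finishes it with no Chebyshev rate at all: $\E[A_nB_n^2\II_{G_n^c}]\le M\,\E[(n/S_n)\II_{G_n^c}]\to0$ directly from uniform integrability of $n/S_n$ and $\PP(G_n^c)\to0$, and $\E[A_n\II_{G_n^c}]\le M\,\E[(\X^T\X/n)\II_{G_n^c}]\to0$ likewise. Either way, the step you deferred is precisely where the stronger reading of assumption 3 must be brought in, so it cannot be waved through.
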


The proof of lemma \ref{section_time_series_lemma} appears in appendix \ref{appendix_time_series}. The lemma readily generalizes to a multivariate setting.  
\nnline

Our first example is that of an autocorrelated covariate, i.e.\ whose observations are correlated. Lemma \ref{section_time_series_lemma} shows that when the entries of $\varepsilon$ are uncorrelated, autocorrelation in $\X$ does not come into effect. However, when $\varepsilon$ is itself autocorrelated, the OLS fit has higher variance the heavier the autocorrelation in $\X$ is. Over data containing covariates with varying degrees of autocorrelation, the OLS fit will admit uneven variance. We take,   as example, autoregressive $\X$ and $\varepsilon$.

\begin{prop}\label{section_time_series_prop_ar}
    In the setting of lemma \ref{section_time_series_lemma}, if
    $$
    \Cov(\X_i,\X_j) \ = \ \pi^{|i-j|} \ \ \ \ ;  \ \ \ \Cov(\varepsilon_i,\varepsilon_j|\{\X_n\}_{n=1}^\infty) \ = \ \sigma^2\rho^{|i-j|}
    $$
    for $\sigma>0$ and $0\leq \pi,\rho<1$, then
    $$
    n\Var(\hat\beta_n) \ \overset{n\to\infty}{\longrightarrow} \ \sigma^2\cdot\frac{1+\pi\rho}{1-\pi\rho}.
    $$
\end{prop}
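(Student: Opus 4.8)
The plan is to invoke Lemma \ref{section_time_series_lemma}, which already performs the probabilistic reduction, turning the claim into the evaluation of a deterministic Toeplitz-type average; I would then compute that average in closed form. So the whole argument splits into verifying the lemma applies and carrying out an elementary calculation.

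First I would check that the autoregressive specification satisfies the four hypotheses of Lemma \ref{section_time_series_lemma}. Since $\Cov(\X_i,\X_j)=\pi^{|i-j|}$ gives $\E[\X_i^2]=1$, the mean of $(\X_1^2+\dots+\X_n^2)/n$ equals $1$, and condition (1) follows from a weak law of large numbers for the weakly dependent sequence $\{\X_i^2\}$, whose correlations decay geometrically and whose fourth moments are bounded by assumption. Condition (3) is immediate from the row-sum bound $\sum_{j}\rho^{|i-j|}\le 2/(1-\rho)$, and condition (2) follows once one observes that the expectation of $\sum_{i,j}(\SSigma_n)_{ij}\X_i\X_j/n$ converges (by the very computation performed below) together with a vanishing-variance argument. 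This verification is the main obstacle, being the only genuinely probabilistic part of the proof; everything that follows is deterministic.

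Granting the lemma, it remains to compute $\lim_{n\to\infty}\tfrac1n\sum_{i,j=1}^n \Cov(\X_i\varepsilon_i,\X_j\varepsilon_j)$. Because $\varepsilon_i$ has mean zero conditioned on $\{\X_n\}$, we have $\E[\X_i\varepsilon_i]=0$, so each covariance equals the raw moment $\E[\X_i\X_j\varepsilon_i\varepsilon_j]$. Conditioning on $\{\X_n\}$ and using the tower property, together with the fact that the conditional means of $\varepsilon_i,\varepsilon_j$ vanish,
\begin{align*}
\Cov(\X_i\varepsilon_i,\X_j\varepsilon_j) \;=\; \E\big[\X_i\X_j\,\E[\varepsilon_i\varepsilon_j\mid\{\X_n\}]\big] \;=\; \sigma^2\rho^{|i-j|}\,\E[\X_i\X_j] \;=\; \sigma^2(\pi\rho)^{|i-j|},
\end{align*}
where I used $\E[\varepsilon_i\varepsilon_j\mid\{\X_n\}]=\sigma^2\rho^{|i-j|}$ and $\E[\X_i\X_j]=\Cov(\X_i,\X_j)=\pi^{|i-j|}$ (valid since $\X$ has mean zero).

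Finally I would evaluate the Toeplitz average with $c:=\pi\rho\in[0,1)$. Counting diagonals (there are $n$ entries with $|i-j|=0$ and $2(n-d)$ entries with $|i-j|=d$),
\begin{align*}
\frac1n\sum_{i,j=1}^n c^{|i-j|} \;=\; 1+2\sum_{d=1}^{n-1}\Big(1-\tfrac{d}{n}\Big)c^d \;\overset{n\to\infty}{\longrightarrow}\; 1+2\sum_{d=1}^\infty c^d \;=\; \frac{1+c}{1-c},
\end{align*}
where the correction terms vanish because $\sum_{d\ge 1} d\,c^d<\infty$ forces $\tfrac1n\sum_d d\,c^d\to 0$. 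Multiplying by $\sigma^2$ yields $n\Var(\hat\beta_n)\to\sigma^2(1+\pi\rho)/(1-\pi\rho)$, as claimed; note the sanity check that $\pi=0$ or $\rho=0$ recovers the uncorrelated variance $\sigma^2$.
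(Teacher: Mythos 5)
Your proof is correct and follows essentially the same route as the paper's: reduce via Lemma \ref{section_time_series_lemma}, compute $\Cov(\X_i\varepsilon_i,\X_j\varepsilon_j)=\sigma^2(\pi\rho)^{|i-j|}$ by the tower property, and evaluate the Toeplitz average diagonal by diagonal to obtain $(1+\pi\rho)/(1-\pi\rho)$. The only difference is your attempt to verify the lemma's four hypotheses, which is unnecessary here (the proposition is stated ``in the setting of Lemma \ref{section_time_series_lemma}'', so those conditions are assumed) and is in fact the one slightly shaky part, since geometric decay of $\Cov(\X_i,\X_j)$ does not by itself give geometric decay of $\Cov(\X_i^2,\X_j^2)$ without further distributional assumptions.
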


Fixing $\rho>0$ the autocorrelation coefficient of $\varepsilon$, we see that OLS variance is inflated along with $\pi$ the autocorrelation coefficient of $\X$. The proof of proposition \ref{section_time_series_prop_ar} is given in appendix \ref{appendix_time_series}.
\nnline

We move on to discuss random effects, which by standard argument reduce to autocorrelation and heteroscedasticity in the noise with magnitude aligned to the covariates associated with random effects. Suppose $\y_i  =  \X_i\beta_i + \varepsilon_i$ for all $i=1,..,n$, with $\beta_1,..,\beta_n$ random such that $\E[\beta_i|\X]=\beta$ is constant among all $i$. An equivalent formulation is
\begin{align}
    \y \ = \ \X\beta + \eta \ \ \ \ ; \ \ \ \ \eta_i \ = \ \varepsilon_i + \X_i(\beta_i-\beta)
\end{align}
for all $i=1,..,n$. We have $\E[\eta|\X] = 0$ so long as $\E[\varepsilon|\X] = 0$, yet the noise variance is dependent of the covariate value. Standard methods for estimating the fixed and random effects in a mixed effects model are similar to GLS in their utilization of the noise and effect covariances \cite{agresti}. They are therefore likewise susceptible to inconsistency under contemporaneous exogeneity, and we are resorted to using OLS to estimate the effects.
\nnline

We demonstrate that fitting a covariate associated with a random effect is accompanied with increased variance, in a simplistic setting of an autoregressive random effect.

\begin{prop}\label{section_time_series_prop_random1}
    Suppose $\y_i=\X_i\beta+(\varepsilon_i+\X_ib_i)$, where $\{\X_i\}_{i=1}^\infty$ are independent and identically distributed with mean zero and variance one, $\{\varepsilon_i\}_{i=1}^\infty$ are independent with mean zero and variance $\sigma^2$, and $\{b_i\}_{i=1}^\infty$ have mean zero and covariance $\Cov(b_i,b_j)=\Var(b_1)\tau^{|i-j|}$. Assume furthermore that $\{\X_i\}_{i=1}^\infty$, $\{\varepsilon_i\}_{i=1}^\infty$ and $\{b_i\}_{i=1}^\infty$ are independent of each other, and that the conditions for lemma \ref{section_time_series_lemma} are met. Then, in the setting of lemma \ref{section_time_series_lemma} it holds that
    $$
    n\Var(\hat\beta_n) \ \overset{n\to\infty}{\longrightarrow} \   \sigma^2 + \Var(b_1)\cdot\bigg(\E[\X_1^4] \ + \ \frac{2\tau}{1-\tau}\bigg).
    $$
\end{prop}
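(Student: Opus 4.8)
The plan is to apply Lemma \ref{section_time_series_lemma} with the aggregated noise $\eta_i = \varepsilon_i + \X_i b_i$ taking the role of the residual term, which is legitimate since $\E[\eta_i \mid \{\X_n\}_{n=1}^\infty] = \E[\varepsilon_i \mid \{\X_n\}] + \X_i\,\E[b_i] = 0$ by the mean-zero and independence hypotheses, and the proposition assumes the lemma's regularity conditions hold. By the lemma it then suffices to evaluate
$$
\lim_{n\to\infty}\frac{1}{n}\sum_{i,j=1}^n \Cov(\X_i\eta_i,\ \X_j\eta_j)
$$
and to exhibit that this limit exists, the computation of the limit itself furnishing the required existence.

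The first step is to expand $\X_i\eta_i = \X_i\varepsilon_i + \X_i^2 b_i$ and compute the covariance entrywise. Each summand has mean zero, so the covariance equals the raw expectation of the product. Multiplying out produces four terms, two of which carry a single isolated factor of $\varepsilon$ or of $b$; these vanish by the mutual independence of $\{\X_i\}$, $\{\varepsilon_i\}$, $\{b_i\}$ together with $\E[\varepsilon_i]=\E[b_i]=0$. What remains is $\E[\X_i\X_j\varepsilon_i\varepsilon_j] + \E[\X_i^2\X_j^2 b_ib_j]$, which factors across the three independent families as $\E[\X_i\X_j]\,\sigma^2\delta_{ij} + \E[\X_i^2\X_j^2]\,\Var(b_1)\tau^{|i-j|}$.

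The second step is to split the double sum into its diagonal and off-diagonal parts. On the diagonal $i=j$ we get $\sigma^2 + \E[\X_1^4]\Var(b_1)$ per term, using $\E[\X_i^2]=1$ and $\E[\X_i^4]=\E[\X_1^4]$; off the diagonal the $\varepsilon$-contribution drops out (independence of the $\varepsilon_i$) and $\E[\X_i^2\X_j^2]=\E[\X_i^2]\E[\X_j^2]=1$, leaving $\Var(b_1)\tau^{|i-j|}$. Dividing by $n$, the $n$ diagonal terms contribute exactly $\sigma^2 + \E[\X_1^4]\Var(b_1)$, while the off-diagonal part becomes the Toeplitz average $\tfrac{2}{n}\Var(b_1)\sum_{k=1}^{n-1}(n-k)\tau^k$, since each gap $k=|i-j|$ arises for $2(n-k)$ ordered pairs.

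The remaining work, and the only genuinely technical point, is the Cesàro evaluation of this geometric average. Writing
$$
\frac{2}{n}\sum_{k=1}^{n-1}(n-k)\tau^k \ = \ 2\sum_{k=1}^{n-1}\tau^k \ - \ \frac{2}{n}\sum_{k=1}^{n-1}k\,\tau^k,
$$
the first sum converges to $2\tau/(1-\tau)$ for $0\le\tau<1$, and the boundary term vanishes because $\sum_{k\ge1}k\tau^k$ converges, forcing its $n^{-1}$-scaled truncation to zero. Collecting the diagonal and off-diagonal limits yields $\sigma^2 + \Var(b_1)\big(\E[\X_1^4] + \tfrac{2\tau}{1-\tau}\big)$, which by Lemma \ref{section_time_series_lemma} is $\lim_{n\to\infty} n\Var(\hat\beta_n)$, as claimed. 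I expect the main obstacle to be nothing more than this handling of the $n^{-1}\sum k\tau^k$ boundary term; every other step is routine bookkeeping of expectations under the independence and mean-zero assumptions.
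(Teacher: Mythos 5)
Your proof is correct and follows essentially the same route as the paper's: apply Lemma \ref{section_time_series_lemma} to the aggregated noise $\eta_i=\varepsilon_i+\X_ib_i$, expand the covariance using the mutual independence and mean-zero assumptions into a diagonal part $\sigma^2+\E[\X_1^4]\Var(b_1)$ and a Toeplitz off-diagonal part, and evaluate the Ces\`aro limit of the geometric sum. The only cosmetic difference is that the paper reuses the already-established limit $\tfrac{1}{n}\sum_{i,j}\tau^{|i-j|}\to\tfrac{1+\tau}{1-\tau}$ from the proof of Proposition \ref{section_time_series_prop_ar} (subtracting $1$ for the diagonal), whereas you recompute the off-diagonal sum directly.
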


Note that $\E[\X_1^4]\geq \E[\X_1^2]^2=1$ by Jensen's inequality. See that the OLS variance grows as the variance and autocorrelation of the random effect grow. The proof of proposition \ref{section_time_series_prop_random1} is given in appendix \ref{appendix_time_series}. To complement proposition \ref{section_time_series_prop_random1}, we compare to a similar situation in which the variance of the noise is not aligned with the covariate.

\begin{prop}\label{section_time_series_prop_random2}
    Suppose $\y_i=\X_i\beta+(\varepsilon_i+\Z_ib_i)$, where $\{\Z_i\}_{i=1}^\infty$ are independent with mean zero and variance one, and are furthermore independent of $\{\X_i\}_{i=1}^\infty$, $\{\varepsilon_i\}_{i=1}^\infty$, $\{b_i\}_{i=1}^\infty$. Assume that the conditions of propositiom \ref{section_time_series_prop_random1} otherwise hold. Then,
    $$
    n\Var(\hat\beta_n) \ \overset{n\to\infty}{\longrightarrow} \   \sigma^2 + \Var(b_1).
    $$
\end{prop}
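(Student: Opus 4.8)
The plan is to invoke Lemma \ref{section_time_series_lemma} with the effective noise $\eta_i = \varepsilon_i + \Z_i b_i$ playing the role of $\varepsilon_i$. First I would verify that $\eta$ meets the lemma's hypotheses: since $\{\varepsilon_i\}$, $\{\Z_i\}$, $\{b_i\}$ are all independent of $\{\X_i\}$ and each has mean zero, the conditional mean satisfies $\E[\eta_i \mid \{\X_n\}_{n=1}^\infty] = 0$, and the conditional noise covariance coincides with the unconditional one, $\Cov(\eta_i,\eta_j \mid \{\X_n\}) = \Cov(\eta_i,\eta_j)$; the remaining regularity requirements are inherited from the assumptions of proposition \ref{section_time_series_prop_random1}. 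The lemma then reduces the claim to evaluating the Ces\`aro limit
\[
\lim_{n\to\infty}\frac{1}{n}\sum_{i,j=1}^n \Cov(\X_i\eta_i,\ \X_j\eta_j).
\]

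Next I would expand $\X_i\eta_i = \X_i\varepsilon_i + \X_i\Z_i b_i$. Because each of the four families $\{\X_i\},\{\varepsilon_i\},\{\Z_i\},\{b_i\}$ has mean zero and they are mutually independent, every summand $\X_i\eta_i$ has mean zero, so $\Cov(\X_i\eta_i,\X_j\eta_j) = \E[\X_i\eta_i\,\X_j\eta_j]$. Expanding the product gives four terms. The two cross terms each carry a lone factor $\Z_i$ or $\Z_j$, which factors out by independence and vanishes since $\E[\Z_i]=0$. The purely-$\varepsilon$ term factors as $\E[\X_i\X_j]\,\E[\varepsilon_i\varepsilon_j] = \delta_{ij}\sigma^2$, using that $\{\X_i\}$ are i.i.d.\ with unit variance and $\{\varepsilon_i\}$ are independent with variance $\sigma^2$ (here $\delta_{ij}$ denotes the Kronecker delta).

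The decisive term is $\E[\X_i\X_j\Z_i\Z_j b_i b_j]$, which factors across the independent families as $\E[\X_i\X_j]\,\E[\Z_i\Z_j]\,\E[b_ib_j]$. The key observation, and the whole point of contrast with proposition \ref{section_time_series_prop_random1}, is that the random effect is now carried by $\Z$ rather than by $\X$, so this term contains the factor $\E[\Z_i\Z_j]=\delta_{ij}$. Off the diagonal this factor vanishes, annihilating the autocorrelation $\E[b_ib_j]=\Var(b_1)\tau^{|i-j|}$ before it can contribute; only the diagonal $i=j$ survives, contributing $\Var(b_1)$. Collecting terms gives $\Cov(\X_i\eta_i,\X_j\eta_j) = \delta_{ij}(\sigma^2 + \Var(b_1))$, so the Ces\`aro average collapses to $\sigma^2 + \Var(b_1)$, as claimed.

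I do not anticipate a genuine analytic obstacle here: the proof is a direct second-moment computation whose substantive content is bookkeeping the independence structure correctly. The one point demanding care is confirming that the composite noise $\eta$ legitimately satisfies the hypotheses of Lemma \ref{section_time_series_lemma} (in particular conditions~2 and~3 on $\SSigma_n$ for the $\eta$-covariance), and pinpointing exactly where the independence of $\Z$ from $\X$ forces the off-diagonal terms to drop out. This is precisely the mechanism that distinguishes the result from the inflated variance of proposition \ref{section_time_series_prop_random1}, where the factor $\E[\X_i^2\X_j^2]$ keeps the off-diagonal $\tau^{|i-j|}$ terms alive and produces the extra $\tfrac{2\tau}{1-\tau}$ and $\E[\X_1^4]$ contributions.
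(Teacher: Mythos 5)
Your proposal is correct and follows essentially the same route as the paper's proof: invoke Lemma \ref{section_time_series_lemma} with the composite noise $\varepsilon_i+\Z_ib_i$, factor the covariance across the mutually independent families, and observe that $\E[\Z_i\Z_j]=\delta_{ij}$ kills the off-diagonal $\tau^{|i-j|}$ contribution, leaving $V_{ij}=\delta_{ij}(\sigma^2+\Var(b_1))$. Your additional remarks on verifying the lemma's hypotheses and on the contrast with Proposition \ref{section_time_series_prop_random1} are sound but go slightly beyond what the paper writes out.
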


The proof of proposition \ref{section_time_series_prop_random2} is given in appendix \ref{appendix_time_series}.
\nnline

The introduction of the time series setting is concluded. We have shown that the OLS psuedo-likelihood overlooks autocorrelation and random effects, causing a pattern of uneven variance in the different elements of the OLS fit. Regularization approaches that employ the same pseudo-likelihood, e.g.\ standard ridge and lasso, will likewise be ignorant to variance patterns stemming from autocorrelation and random effects. In contrast, 2REG regularization explicitly takes into account the covariance structure of the underlying estimator, allowing for a better-informed combination with the prior distribution. We now turn to formally defining the 2REG estimator.

\section{2REG Regularization}\label{section_2reg}

As before, let $\X$ be a random sample drawn from a distribution parameterized by $\beta$, let $\hat\beta(\X)$ be an estimator of $\beta$, and let $p(\beta)$ be the density of a prior distribution of $\beta$. We make the following definitions.
\begin{defi}\label{section_2reg_defi}
    A 2REG estimator is defined by $\tworeg = {\arg\max}_{\beta} \{p(\hat\beta|\beta) \cdot p(\beta) \}$, where $p(\hat\beta|\beta)$ is some function of $\hat\beta,\beta$ and possibly ancillary statistics. 
    
    If $p(\hat\beta|\beta)$ is the probability density of observing $\hat\beta$ given an underlying parameter value $\beta$, we say that $\tworeg$ is a \textit{correctly specified} 2REG estimator. 
    
    If $p(\hat\beta|\beta)$ is the density of the normal distribution $\hat\beta|\beta\sim\NN(\beta,\C)$ for some fixed covariance matrix $\C$, we say that $\tworeg$ is a \textit{normal} 2REG estimator.
\end{defi}

Many pseudo-ML estimators are consistent and asymptotically normally distributed \cite{ventura}. The OLS estimator under contemporaneous exogeneity is such \cite{wooldridge}, and furthermore $\ols-\beta$ is a pivotal quantity, in other words $\beta$ is a location parameter, hence the covariance $\Cov(\ols|\beta)$ is independent of $\beta$. Under these circumstances, a normal 2REG makes for a good approximation of the correctly specified 2REG estimator.
\nnline

To compute the normal 2REG estimator for given covariance matrix $\C$ and prior $p(\beta)$, one is reduced in the general case to solving OLS with prior $p(\beta)$.
\begin{prop}\label{section_2reg_prop_normal}
    A normal 2REG estimator is given by solving OLS with prior $p(\beta)$ over data $(\X,\y)$ such that $\X^T\X=\C^{-1}$ and $\X^T\y=\C^{-1}\hat\beta$.
\end{prop}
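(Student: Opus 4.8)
The plan is to directly manipulate the objective function defining the normal 2REG estimator and show that it coincides, up to additive constants in $\beta$, with the objective of an OLS-with-prior fit over a suitably constructed pseudo-dataset $(\X,\y)$. The starting point is the definition $\tworeg = \arg\max_\beta\{p(\hat\beta|\beta)\cdot p(\beta)\}$, where $p(\hat\beta|\beta)$ is the $\NN(\beta,\C)$ density. Taking logarithms, the maximization becomes
\begin{align*}
\tworeg \ = \ \argmax{\beta} \ \Big\{ -\half(\hat\beta-\beta)^T\C^{-1}(\hat\beta-\beta) + \log p(\beta) \Big\},
\end{align*}
since the normalizing constant of the normal density does not depend on $\beta$.

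Next I would show that the quadratic form above is, up to a $\beta$-independent constant, exactly the residual sum of squares $\half\|\y-\X\beta\|^2$ for any data matrix $\X$ and response $\y$ satisfying $\X^T\X=\C^{-1}$ and $\X^T\y=\C^{-1}\hat\beta$. Expanding,
\begin{align*}
\half\|\y-\X\beta\|^2 \ = \ \half\y^T\y \ - \ \beta^T\X^T\y \ + \ \half\beta^T\X^T\X\beta,
\end{align*}
and substituting the two defining identities turns the two $\beta$-dependent terms into $-\beta^T\C^{-1}\hat\beta + \half\beta^T\C^{-1}\beta$. Completing the square, this equals $\half(\beta-\hat\beta)^T\C^{-1}(\beta-\hat\beta)$ plus terms independent of $\beta$. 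Hence the OLS-with-prior objective $-\half\|\y-\X\beta\|^2 + \log p(\beta)$ and the 2REG objective differ only by an additive constant in $\beta$, so their maximizers agree.

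It then remains to verify that data $(\X,\y)$ meeting the two stated constraints actually exist, since the proposition asserts the reduction is always possible. Given that $\C$ is a covariance matrix and hence positive definite (invertible), I would exhibit an explicit construction: take $\X$ to be any matrix factor of $\C^{-1}$, for instance a symmetric square root $\X=\C^{-1/2}$ so that $\X^T\X=\C^{-1}$, and then set $\y=\X\hat\beta=\C^{-1/2}\hat\beta$, which immediately gives $\X^T\y=\C^{-1}\hat\beta$. This shows the pseudo-data can always be constructed, completing the argument.

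**The main obstacle**, such as it is, is less a mathematical difficulty than a matter of stating the result at the right level of generality: the phrase ``solving OLS with prior $p(\beta)$'' must be interpreted as maximizing $-\half\|\y-\X\beta\|^2 + \log p(\beta)$, i.e.\ the pseudo-likelihood (\ref{section_time_series_pseudo_ols}) combined with the prior exactly as in definition \ref{section_2reg_defi}. The one point requiring genuine care is tracking that every term dropped along the way is truly constant in $\beta$ (the normalizing constant of the normal density, $\half\y^T\y$, and the completed-square remainder), so that the two argmax problems are provably identical rather than merely similar. Positive-definiteness of $\C$ guarantees $\C^{-1}$ and its square root are well defined, so no degeneracy arises.
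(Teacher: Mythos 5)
Your proposal is correct and follows essentially the same route as the paper: the paper's proof is the one-line identity $-2\log p(\hat\beta|\beta) = (\beta-\hat\beta)^T\C^{-1}(\beta-\hat\beta) = (\y-\X\beta)^T(\y-\X\beta)+c$ with $c$ independent of $\beta$, which is exactly the expansion and completion of the square you carry out in detail. Your existence argument via a square root of $\C^{-1}$ matches the paper's remark that such $\X,\y$ may be obtained via Cholesky decomposition.
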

\begin{proof}
    A normal 2REG estimator assumes the distribution $\hat\beta|\beta\sim\NN(\beta,\C)$, consequently
    \begin{align*}
        -2\log p(\hat\beta|\beta) \ = \ (\beta-\hat\beta)^T\C^{-1}(\beta-\hat\beta) \ = \ (\y-\X\beta)^T(\y-\X\beta) + c
    \end{align*}
    where $c$ does not depend on $\beta$.
\end{proof}

Note that such $\X,\y$ may be obtained via Cholesky decomposition. We now apply normal 2REG regularization to the time series problem discussed in section \ref{section_time_series}, in the form of ridge regularization.
\nnline

\inlinesection{Application to the time series problem}

\begin{cor}\label{section_2reg_cor_ols}
    If $\hat\beta$ is an OLS estimator then the normal 2REG ridge estimator is given by
    $$
    \tworeg \ = \ (\X^T\X + \lambda\X^T\X\C)^{-1}\X^T\y
    $$
    where $\C=\Cov(\ols)$ and $\lambda>0$.
\end{cor}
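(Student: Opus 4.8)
The plan is to specialize the normal 2REG objective of Definition \ref{section_2reg_defi} to a Gaussian (ridge) prior and then carry out the resulting quadratic minimization explicitly, folding in the OLS solution at the end. A ridge prior is the Gaussian $\beta\sim\NN(0,\tfrac{1}{\lambda}\I)$, contributing $-2\log p(\beta) = \lambda\beta^T\beta + \text{const}$. Combining this with the expression for $-2\log p(\hat\beta|\beta)$ computed in the proof of Proposition \ref{section_2reg_prop_normal}, maximizing $p(\hat\beta|\beta)\cdot p(\beta)$ is equivalent to minimizing the convex quadratic
\begin{align*}
    Q(\beta) \ = \ (\beta - \hat\beta)^T\C^{-1}(\beta - \hat\beta) \ + \ \lambda\beta^T\beta, \qquad \hat\beta = \ols.
\end{align*}
Since $\C^{-1} + \lambda\I$ is positive definite, $Q$ is strictly convex and the $\arg\max$ is a unique, well-defined minimizer.

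The next step is to set $\nabla Q(\beta) = 0$, yielding the normal equations $(\C^{-1} + \lambda\I)\beta = \C^{-1}\hat\beta$ and hence $\beta = (\C^{-1} + \lambda\I)^{-1}\C^{-1}\hat\beta$. I would then simplify by factoring $\C$ out of the inverse, using $(\C^{-1} + \lambda\I)^{-1}\C^{-1} = \big(\C(\C^{-1} + \lambda\I)\big)^{-1} = (\I + \lambda\C)^{-1}$, which reduces the estimator to the compact form $(\I + \lambda\C)^{-1}\hat\beta$.

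It remains to substitute the OLS form $\hat\beta = (\X^T\X)^{-1}\X^T\y$ and merge the two leading inverses into a single one. The point requiring care is that these matrices need not commute, so I would apply $(\I + \lambda\C)^{-1}(\X^T\X)^{-1} = \big(\X^T\X(\I + \lambda\C)\big)^{-1} = (\X^T\X + \lambda\X^T\X\C)^{-1}$, keeping $\X^T\X$ on the \emph{left} of the product so that it expands to exactly the stated denominator. This produces $\tworeg = (\X^T\X + \lambda\X^T\X\C)^{-1}\X^T\y$, as claimed.

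The argument is essentially routine once the prior is identified as Gaussian; the only genuine pitfall is the ordering of factors in the two matrix-inverse simplifications, since reversing them would give the distinct and incorrect expression $(\X^T\X + \lambda\C\X^T\X)^{-1}$. An equivalent route is to invoke Proposition \ref{section_2reg_prop_normal} directly: applying it to the surrogate data $\X',\y'$ with $(\X')^T\X' = \C^{-1}$ and $(\X')^T\y' = \C^{-1}\hat\beta$, the standard ridge formula $\big((\X')^T\X' + \lambda\I\big)^{-1}(\X')^T\y'$ again gives $(\C^{-1} + \lambda\I)^{-1}\C^{-1}\hat\beta$, after which the simplification proceeds identically.
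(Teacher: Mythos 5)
Your proposal is correct and follows essentially the same route as the paper: the paper likewise starts from $\tworeg=(\C^{-1}+\lambda\I)^{-1}\C^{-1}\ols$ (quoting Proposition \ref{section_2reg_prop_normal} for the ridge-prior case) and then performs exactly the two inverse-product simplifications you describe, with the same attention to factor ordering. Your explicit derivation of the starting expression from the Gaussian prior and the quadratic objective merely fills in what the paper leaves implicit.
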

\begin{proof}
    By proposition \ref{section_2reg_prop_normal} we have $\tworeg  =  (\C^{-1}+\lambda\I)^{-1}\C^{-1}\ols$, hence
    $$
    \tworeg \ = \ (\I+\lambda\C)^{-1}(\X^T\X)^{-1}\X^T\y \ = \ (\X^T\X+\lambda\X^T\X\C)^{-1}\X^T\y
    $$
    as required.
\end{proof}

Compare with the standard ridge estimator $\hat\beta=(\X^T\X+\lambda\I)^{-1}\X^T\y$, which de facto assumes $\C=(\X^T\X)^{-1}$ on account of the misspecified OLS pseudo-likelihood (\ref{section_time_series_pseudo_ols}). Standard ridge does not take into account autocorrelation and other factors that might affect OLS covariance, whereas 2REG ridge imposes penalty adjusted according to the covariance of the underlying estimator. Note that $\C$ is not in general known, and one is usually resorted to estimating it from the data.
\nnline

We accentuate suboptimality of standard ridge compared to 2REG ridge via two simple observations. First, with 2REG ridge increasing the penalty parameter is guaranteed to entail reduction in marginal variance, whereas for standard ridge this is not in general true.

\begin{prop}\label{section_2reg_prop_monotone}
    Fix $\LLambda$ a $p\times p$ matrix and define $\hat\beta_\lambda=(\X^T\X+\lambda\X^T\X\LLambda)^{-1}\X^T\y$ for any $\lambda\geq0$. Then, $\Cov(\hat\beta_{\lambda})$ is monotone decreasing in $\lambda\geq 0$ in the sense of positive definite matrices, if and only if $\LLambda^T\C^{-1}+\C^{-1}\LLambda$ is positive definite, where $\C=\Cov(\ols)$.
\end{prop}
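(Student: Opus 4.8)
The plan is to first strip away the design matrix. Since $\X^T\X+\lambda\X^T\X\LLambda = \X^T\X(\I+\lambda\LLambda)$, the estimator collapses to
$\hat\beta_\lambda = (\I+\lambda\LLambda)^{-1}(\X^T\X)^{-1}\X^T\y = (\I+\lambda\LLambda)^{-1}\ols$, where I would note that $\I+\lambda\LLambda$ must be assumed invertible for $\lambda\ge 0$ so that $\hat\beta_\lambda$ is well-defined throughout the range in question. Writing $\A_\lambda=(\I+\lambda\LLambda)^{-1}$ and $\C=\Cov(\ols)$, linearity of covariance gives $\Cov(\hat\beta_\lambda)=\A_\lambda\,\C\,\A_\lambda^T$, so the design $\X,\y$ enters only through $\C$ and the whole question reduces to the Loewner-order behavior of the map $\lambda\mapsto \A_\lambda\C\A_\lambda^T$.

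Rather than differentiate this product directly, which produces an unwieldy expression involving $\A_\lambda\LLambda\A_\lambda$, I would pass to the inverse. Because inversion reverses the order on the positive-definite cone, $\Cov(\hat\beta_\lambda)$ is decreasing exactly when $\Cov(\hat\beta_\lambda)^{-1}$ is increasing; and differentiating $\Cov^{-1}\Cov=\I$ yields $\frac{d}{d\lambda}\Cov^{-1} = -\Cov^{-1}\big(\frac{d}{d\lambda}\Cov\big)\Cov^{-1}$, a congruence by the symmetric invertible matrix $\Cov^{-1}$, so $\frac{d}{d\lambda}\Cov\prec 0$ holds iff $\frac{d}{d\lambda}\Cov^{-1}\succ 0$. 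The payoff is that the inverse is an honest matrix polynomial: $\Cov(\hat\beta_\lambda)^{-1}=(\I+\lambda\LLambda)^T\C^{-1}(\I+\lambda\LLambda)=\C^{-1}+\lambda(\LLambda^T\C^{-1}+\C^{-1}\LLambda)+\lambda^2\LLambda^T\C^{-1}\LLambda$, with derivative $(\LLambda^T\C^{-1}+\C^{-1}\LLambda)+2\lambda\,\LLambda^T\C^{-1}\LLambda$.

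I would read ``monotone decreasing in the sense of positive definite matrices'' as the instantaneous statement that $\frac{d}{d\lambda}\Cov(\hat\beta_\lambda)\prec 0$ for every $\lambda\ge 0$, equivalently that the derivative above is positive definite for every $\lambda\ge 0$. The quadratic coefficient $\LLambda^T\C^{-1}\LLambda$ is positive semidefinite because $\C^{-1}\succ 0$, so on $\lambda\ge 0$ the derivative only increases in the Loewner order; hence it is positive definite throughout iff it is positive definite at the binding endpoint $\lambda=0$, where it equals precisely $\LLambda^T\C^{-1}+\C^{-1}\LLambda$. This delivers both implications simultaneously. The step demanding the most care is the monotonicity characterization itself: one must justify reducing the Loewner-order monotonicity to a pointwise derivative condition and track the order/sign reversal under inversion correctly. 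I would also flag that under the weaker finite-difference reading of monotonicity the conclusion degrades to $\LLambda^T\C^{-1}+\C^{-1}\LLambda$ being merely positive semidefinite, since the semidefinite quadratic term can sustain strict decrease away from $\lambda=0$; this is exactly why the derivative formulation is the natural one for a positive-definite conclusion.
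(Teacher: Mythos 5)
Your argument is correct and follows essentially the same route as the paper's proof: reduce to $\hat\beta_\lambda=(\I+\lambda\LLambda)^{-1}\ols$, pass to the inverse covariance $\C^{-1}+\lambda(\LLambda^T\C^{-1}+\C^{-1}\LLambda)+\lambda^2\LLambda^T\C^{-1}\LLambda$, use positive semidefiniteness of the quadratic coefficient for one direction and the derivative at $\lambda=0$ for the converse. Your closing remark about the positive definite versus positive semidefinite discrepancy under the finite-difference reading of monotonicity is a genuine subtlety that the paper's own proof (which argues with semidefiniteness throughout) leaves unaddressed, and your derivative-based reading is the one that actually matches the proposition as stated.
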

Indeed, 2REG ridge uses $\LLambda=\C$ whereas standard ridge uses $\LLambda=(\X^T\X)^{-1}$. The proof of proposition \ref{section_2reg_prop_monotone} appears in appendix \ref{appendix_monotone}. Second, see that the statistics $(\X,\ols)$ suffice for computing the standard ridge estimator. Given that indeed only $(\X,\ols)$ are observed, 2REG ridge is in fact the MAP estimator.
\begin{prop}\label{section_2reg_prop_pmap}
    A correctly specified $\tworeg$ is the MAP estimator given that only $\hat\beta$ is observed, possibly along with ancillary statistics.
\end{prop}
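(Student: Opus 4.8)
The plan is to show that the two optimization problems defining $\betamap$ and a correctly specified $\tworeg$ share the same objective, via a direct application of Bayes' theorem. The MAP estimator is by definition the mode of the posterior distribution of $\beta$ given the observed data. When the only observed quantity is $\hat\beta$, the relevant posterior is $p(\beta|\hat\beta)$, so that $\betamap = \arg\max_\beta p(\beta|\hat\beta)$.

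First I would apply Bayes' theorem to write $p(\beta|\hat\beta) = p(\hat\beta|\beta)\,p(\beta)/p(\hat\beta)$. Since the marginal density $p(\hat\beta)$ does not depend on $\beta$, it does not affect the location of the maximizer, whence $\betamap = \arg\max_\beta \{p(\hat\beta|\beta)\,p(\beta)\}$. In the correctly specified case, the factor $p(\hat\beta|\beta)$ appearing in definition \ref{section_2reg_defi} is by construction the true conditional density of $\hat\beta$ given $\beta$, which is exactly the likelihood factor appearing in Bayes' theorem. The two argmax problems therefore have identical objectives, and the estimators coincide.

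The only point requiring care is the handling of ancillary statistics, which I would address next. Suppose that in addition to $\hat\beta$ one observes an ancillary statistic $A$, so that the information set is the pair $(\hat\beta, A)$ and the MAP maximizes $p(\beta|\hat\beta, A)$. Bayes' theorem now gives $p(\beta|\hat\beta, A)\propto p(\hat\beta|A,\beta)\,p(A|\beta)\,p(\beta)$, and since an ancillary statistic has, by definition, a marginal distribution free of $\beta$, the factor $p(A|\beta)=p(A)$ can be absorbed into the normalizing constant. This leaves $\betamap = \arg\max_\beta \{p(\hat\beta|A,\beta)\,p(\beta)\}$, which matches the correctly specified 2REG objective once $p(\hat\beta|\beta)$ is read as the density of $\hat\beta$ conditioned on the ancillary statistic, as stipulated in definition \ref{section_2reg_defi}.

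I expect no substantive computational obstacle; the content of the proposition is conceptual rather than technical. The essential subtlety, and the reason the statement is worth making, lies in the restriction of the observed data to $\hat\beta$ alone: the MAP estimator computed from the full sample $\X$ would in general differ from $\tworeg$, and the proposition asserts equality precisely when the statistician's information is reduced to the estimator together with any ancillaries. Being explicit about the conditioning is thus the one place where precision is needed, and it is exactly what justifies the interpretation of $p(\hat\beta|\beta)$ as a pseudo-likelihood for the purpose of regularizing $\hat\beta$.
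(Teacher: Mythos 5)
Your proposal is correct and follows essentially the same route as the paper: the paper's one-line proof likewise rests on Bayes' theorem together with the identity $p(\hat\beta,U|\beta)=p(\hat\beta|U,\beta)\,p(U)$ for an ancillary $U$, so that the $\beta$-free factors are absorbed into the normalizing constant. Your version merely spells out the steps the paper leaves implicit.
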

\begin{proof}
    Follows immediately from definition \ref{section_2reg_defi} given that $p(\hat\beta|\beta)=p(\hat\beta,U|\beta)\cdot c$ for any ancillary statistic $U$, where $c$ is independent of $\beta$.
\end{proof}

We now turn to highlighting several further simple observations regarding 2REG estimators in general setting.\nnline

\inlinesection{General properties of 2REG}

A consequence of proposition \ref{section_2reg_prop_pmap} is that if $\hat\beta$ is sufficient or has ancillary complement, then a correctly specified $\tworeg$ is the MAP estimator. This is the case with ML estimators in exponential families.
\begin{cor}\label{section_2reg_cor}
    If $\hat\beta$ is the ML estimator in a regular minimal exponential family which is not curved, then a correctly specified $\tworeg$ is the MAP estimator.    
\end{cor}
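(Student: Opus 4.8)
The plan is to reduce everything to Proposition~\ref{section_2reg_prop_pmap}, which already establishes that a correctly specified $\tworeg$ coincides with the MAP estimator whenever the estimator $\hat\beta$ is sufficient (or admits an ancillary complement). Thus the entire task collapses to verifying that the ML estimator in a regular minimal non-curved exponential family is itself a sufficient statistic; once this is shown, the corollary follows by a direct appeal to the proposition.

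First I would write the family in its natural form,
$$
p(x\mid\theta) \ = \ h(x)\exp\big(\theta^T T(x) - A(\theta)\big),
$$
where $T(x)$ is the natural sufficient statistic and $\theta$ ranges over the interior of the natural parameter space. Minimality guarantees that the components of $T$ are affinely independent, and non-curvedness (full rank) means $\theta$ varies over a full-dimensional open subset of $\R^p$ with no functional constraints among its coordinates. These are precisely the hypotheses that make the log-partition function $A$ strictly convex on the interior of the parameter space.

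Second I would differentiate the log-likelihood and record the score equation satisfied by the ML estimator $\hat\theta$, namely
$$
T(x) \ = \ \grad A(\hat\theta) \ = \ \mathbb{E}_{\hat\theta}[T(X)] \ =: \ \mu(\hat\theta).
$$
For a regular minimal exponential family the mean map $\mu=\grad A$ is a smooth bijection from the interior of the natural parameter space onto the interior of the mean parameter space, by strict convexity of $A$. Consequently $\hat\theta = \mu^{-1}(T(x))$ is a one-to-one function of the natural sufficient statistic $T(x)$. Since a one-to-one function of a sufficient statistic is again sufficient, the ML estimator $\hat\beta=\hat\theta$ is sufficient, and Proposition~\ref{section_2reg_prop_pmap} then yields the claim.

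The hard part will be the bijectivity of $\mu=\grad A$, equivalently the strict convexity of $A$, which is exactly where both hypotheses are load-bearing: minimality prevents $A$ from being merely convex (a degenerate direction in $T$ would flatten it), while non-curvedness ensures $T$ and $\theta$ share the same dimension so that a bijection is even possible. In the curved case this step breaks down --- the sufficient statistic lives in a higher-dimensional space than the parameter, no such bijection exists, and the ML estimator is generally not sufficient --- which explains why the corollary is stated only for non-curved families.
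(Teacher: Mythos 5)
Your proposal is correct and follows essentially the same route as the paper: reduce to Proposition~\ref{section_2reg_prop_pmap} by showing the ML estimator is sufficient, using minimality and non-curvedness to make the ML estimator a bijection of the natural sufficient statistic. The only difference is that you prove the sufficiency of the canonical MLE explicitly via bijectivity of the mean map $\grad A$, where the paper simply cites Sundberg, and you leave implicit the step (sufficiency implies Bayes sufficiency) that connects sufficiency to the proposition's conclusion --- a step the paper spells out with a citation.
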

\begin{proof}
     In a regular exponential family in minimal canonical parameterization, if a ML estimator exists it is sufficient \cite{sundberg}. Since the family is minimal and not curved, the ML estimator of the parameter $\beta$ is a bijection of the canonical ML estimator, therefore also sufficient. Since sufficiency implies Bayes sufficiency \cite{blackwell}, by proposition \ref{section_2reg_prop_pmap} the correctly specified 2REG identifies with MAP.
\end{proof}

Thus, 2REG typically identifies with MAP for ML estimators, while providing a generalization to pseudo-ML estimators. Next, we relate some of the properties of 2REG to its underlying estimator. An immediate observation is that 2REG is constant on level sets of the underlying, in other words it is a function of the underlying estimator. This implies MAP being a function of the ML estimator.

\begin{prop}\label{section_2reg_prop_level_sets}
    Let $\X,\X'$ be data sets. If $\hat\beta(\X)=\hat\beta(\X')$ then $\tworeg(\X)=\tworeg(\X')$.
\end{prop}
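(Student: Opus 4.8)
The plan is to observe that the optimization problem defining $\tworeg$ depends on the data set only through the value $\hat\beta$, so that the solution is a fixed function of $\hat\beta$; the claim then reduces to the triviality that a function returns equal outputs on equal inputs. First I would rewrite the objective from Definition \ref{section_2reg_defi} as a bivariate map $F(\beta; b) = p(b\mid\beta)\, p(\beta)$ in which the first slot is a free argument. By definition $\tworeg(\X) = \arg\max_\beta F\big(\beta;\, \hat\beta(\X)\big)$. The prior $p(\beta)$ carries no dependence on the data, and the only place the data enters $F$ is through the slot $b$, instantiated as $\hat\beta(\X)$. Hence $\tworeg(\X) = \Phi\big(\hat\beta(\X)\big)$, where $\Phi(b) := \arg\max_\beta F(\beta; b)$ is a (possibly set-valued) map that does not depend on the data.

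With this, the conclusion is immediate: if $\hat\beta(\X) = \hat\beta(\X')$ then $\Phi$ is evaluated at a common point, giving $\tworeg(\X) = \Phi(\hat\beta(\X)) = \Phi(\hat\beta(\X')) = \tworeg(\X')$. Note that no uniqueness or regularity assumption on the $\arg\max$ is needed, since we never compare the maximizers of two distinct functions; even when the maximizer fails to be unique the argument yields equality of the two maximizer sets, which is all that is asserted.

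The single point demanding care is the clause in Definition \ref{section_2reg_defi} allowing $p(\hat\beta\mid\beta)$ to depend on ancillary statistics besides $\hat\beta$. If such a statistic $U$ enters the pseudo-likelihood, the objective depends on the data through the pair $(\hat\beta, U)$ and the argument above gives $\tworeg = \Phi(\hat\beta, U)$ instead. The statement is then to be read with the ancillary part held fixed, consistent with the conditional framework of Definition \ref{section_2reg_defi} and with the stated convention of omitting the conditioning on ancillary statistics from the notation; under that reading the two objectives again coincide and the conclusion stands. This is the only genuine subtlety—apart from it, the proposition is a direct transcription of the definition, which is why the surrounding text flags it as an immediate observation.
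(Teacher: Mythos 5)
Your argument is correct and is essentially the paper's own proof, which simply notes that the objective function in Definition \ref{section_2reg_defi} is equal between $\X$ and $\X'$; your extra care about non-unique maximizers and about ancillary statistics entering the pseudo-likelihood is a reasonable elaboration of the same one-line observation.
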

\begin{proof}
    The objective function in definition \ref{section_2reg_defi} is equal between $\X,\X'$.
\end{proof}

A converse of proposition \ref{section_2reg_prop_level_sets} holds for location-parameter families which are log-concave in the parameter, such as the normal 2REG.

\begin{prop}\label{section_tworeg_prop_level_sets_converse}
    Suppose $\hat\beta-\beta$ is pivotal and $p(\hat\beta|\beta)$ is strictly log-concave and differentiable as a function of $\beta$. Assume furthermore that the prior density $p(\beta)$ is strictly positive everywhere and differentiable, and let $\X,\X'$ be data sets. If $\tworeg(\X)=\tworeg(\X')$ then $\hat\beta(\X)=\hat\beta(\X')$.
\end{prop}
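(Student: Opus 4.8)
The plan is to use pivotality to recast both objective functions in pure location-parameter form, read off the first-order condition at the common maximizer, and then invoke strict concavity to conclude that the gradient map is injective. First I would use the assumption that $\hat\beta-\beta$ is pivotal to write $p(\hat\beta|\beta)=g(\hat\beta-\beta)$, where $g$ is the fixed ($\beta$-independent) density of $\hat\beta-\beta$, and set $h:=\log g$. Strict log-concavity and differentiability of $p(\hat\beta|\beta)$ in $\beta$ pass, through the affine change of variable $\beta\mapsto\hat\beta-\beta$ (a reflection composed with a translation, which preserves concavity), into strict concavity and differentiability of $h$ on its domain. Writing $\hat\beta=\hat\beta(\X)$, $\hat\beta'=\hat\beta(\X')$ and $\beta^*=\tworeg(\X)=\tworeg(\X')$, the objectives defining the two 2REG estimators become
$$F(\beta)=h(\hat\beta-\beta)+\log p(\beta), \qquad F'(\beta)=h(\hat\beta'-\beta)+\log p(\beta),$$
and by hypothesis $\beta^*$ maximizes both.

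Next, since $F$ and $F'$ are differentiable and $\beta^*$ is an interior maximizer of each, their gradients vanish at $\beta^*$, giving
$$-\grad h(\hat\beta-\beta^*)+\grad\log p(\beta^*)=0, \qquad -\grad h(\hat\beta'-\beta^*)+\grad\log p(\beta^*)=0.$$
Subtracting cancels the common prior term and yields $\grad h(\hat\beta-\beta^*)=\grad h(\hat\beta'-\beta^*)$. It then remains only to show that $\grad h$ is injective, which is a clean consequence of strict concavity alone: if $\grad h(a)=\grad h(b)$ with $a\neq b$, adding the two strict supporting-hyperplane inequalities $h(b)<h(a)+\grad h(a)^T(b-a)$ and $h(a)<h(b)+\grad h(b)^T(a-b)$ gives $0<(\grad h(a)-\grad h(b))^T(b-a)$, which equals $0$ under the assumed equality — a contradiction. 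Hence $\hat\beta-\beta^*=\hat\beta'-\beta^*$, that is $\hat\beta(\X)=\hat\beta(\X')$.

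The main obstacle I anticipate is not the algebra but the justification of the first-order condition. Because the prior is only assumed strictly positive and differentiable, and not log-concave, the full objective $F$ need not be concave, so its maximizer need not be unique and could in principle lie on the boundary of the effective domain of $h$; some care is required to argue interiority so that stationarity genuinely holds (this is immediate when $g$ is strictly log-concave on all of $\R^p$, as in the normal 2REG case, since then $F$ is defined and differentiable everywhere). Fortunately, the argument needs only that the shared value $\beta^*$ be a common stationary point of $F$ and $F'$; after that, the injectivity of $\grad h$ — which involves the prior nowhere — forces the two underlying estimators to coincide.
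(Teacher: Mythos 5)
Your proposal is correct and follows essentially the same route as the paper's proof: pivotality turns the pseudo-likelihood into a location family, the first-order conditions at the common maximizer cancel the prior's gradient, and strict concavity makes the gradient map injective, forcing $\hat\beta(\X)=\hat\beta(\X')$. Your explicit supporting-hyperplane argument for injectivity and your remark about interiority of the maximizer merely spell out steps the paper leaves implicit.
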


\begin{proof}
    Denote $f_{\hat\beta}(\beta)= p(\hat\beta|\beta)$ and we may write
    \begin{align}\label{section_level_prop_converse_eq1}
        \tworeg(\hat\beta) \ = \ {\arg\max}_{\beta} \ \{ \ f_{\hat\beta}(\beta)\cdot p(\beta) \ \}.
    \end{align}
    Suppose that $\tworeg(\hat\beta)=\tworeg(\hat\beta')$ and we are required to show $\hat\beta=\hat\beta'$. By pivotality we have  $f_{\hat\beta+b}(\beta)=f_{\hat\beta}(\beta-b)$, therefore
    \begin{align}\label{section_level_prop_converse_eq2}
        \tworeg(\hat\beta') \ = \ {\arg\max}_{\beta} \ \{ \ f_{\hat\beta}(\beta-b)\cdot p(\beta) \ \}
    \end{align}
    where $b=\hat\beta'-\hat\beta$. Differentiating the log of the objective functions in equations (\ref{section_level_prop_converse_eq1}) and (\ref{section_level_prop_converse_eq2}) at the point $\beta^*$ at which they are both maximized, we get
    \begin{align*}
        \grad \log f_{\hat\beta}(\beta^*) + \grad \log p(\beta^*) \ = \ 0 \ = \ \grad \log f_{\hat\beta}(\beta^*-b) + \grad \log p(\beta^*),
    \end{align*}
    consequently $\grad \log f_{\hat\beta}(\beta^*)=\grad \log f_{\hat\beta}(\beta^*-b)$. A strictly concave function such as $\log f_{\hat\beta}$ has strictly monotone gradient which cannot equate at distinct points, hence necessarily $b=0$ meaning $\hat\beta=\hat\beta'$.
    % (y-x)^T(\grad g(y) - \grad g(x)) < 0 unless x=y
\end{proof}

One further property of the underlying parameter the normal 2REG preserves is consistency. This is particularly important for our application to time series, as our original motivation was to develop a more efficient estimator while still guaranteeing consistency, recall section \ref{section_time_series}.

\begin{prop}\label{section_2reg_prop_consistent}
    Suppose the prior density $p(\beta)$ is strictly positive everywhere, continuous and bounded. Then, if $\hat\beta$ is consistent and the covariance $\C$ vanishes asymptotically, then normal 2REG is consistent.
\end{prop}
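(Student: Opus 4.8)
The plan is to exploit that, as $\C$ vanishes, the quadratic pseudo-log-likelihood term in the 2REG objective increasingly dominates the bounded log-prior term, pinning the maximizer to a shrinking neighborhood of the already-consistent underlying estimator $\hat\beta$. Write $\beta_0$ for the true parameter and suppress the dependence on the sample size $n$ from the notation, with all limits understood in probability as $n\to\infty$.

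First I would recast the optimization of definition \ref{section_2reg_defi}, exactly as in the proof of proposition \ref{section_2reg_prop_normal}, as the minimization of
\[
g(\beta) \ = \ (\beta-\hat\beta)^T\C^{-1}(\beta-\hat\beta) \ - \ 2\log p(\beta).
\]
Since $p$ is bounded, say $p\le M$, the log-prior term is bounded below by $-2\log M$ while the quadratic term is coercive; thus $g$ is continuous and coercive and attains a minimum, so $\tworeg$ is well defined, and the bound derived next holds for any selected minimizer in case of non-uniqueness.

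The crux is the optimality comparison $g(\tworeg)\le g(\hat\beta)=-2\log p(\hat\beta)$, which rearranges to
\[
(\tworeg-\hat\beta)^T\C^{-1}(\tworeg-\hat\beta) \ \le \ 2\log p(\tworeg)-2\log p(\hat\beta) \ \le \ 2\log M - 2\log p(\hat\beta),
\]
where boundedness of $p$ is used precisely to cap the term $\log p(\tworeg)$ without having to locate $\tworeg$. Bounding the quadratic form from below by its flattest direction, $(\tworeg-\hat\beta)^T\C^{-1}(\tworeg-\hat\beta)\ge \|\C\|^{-1}\|\tworeg-\hat\beta\|^2$ with $\|\C\|$ the operator norm, yields
\[
\|\tworeg-\hat\beta\|^2 \ \le \ \|\C\|\big(2\log M - 2\log p(\hat\beta)\big).
\]

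To finish, I would observe that $\|\C\|\to 0$ since $\C$ vanishes asymptotically, while consistency of $\hat\beta$ together with continuity and strict positivity of $p$ gives $p(\hat\beta)\to p(\beta_0)>0$, so the bracketed factor converges in probability to the finite constant $2\log M-2\log p(\beta_0)$; the product thus tends to zero, giving $\tworeg-\hat\beta\to 0$. A triangle inequality with $\hat\beta-\beta_0\to 0$ then yields $\tworeg\to\beta_0$. The step deserving the most care is the uniform control of the prior contribution: strict positivity guarantees $\log p(\beta_0)$ is finite, boundedness supplies the cap $\log p(\tworeg)\le\log M$ on the unlocated maximizer, and continuity converts consistency of $\hat\beta$ into convergence of $\log p(\hat\beta)$ — all three hypotheses on $p$ are genuinely used.
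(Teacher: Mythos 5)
Your proof is correct. It rests on the same basic comparison as the paper's proof --- evaluating the objective at the candidate point $\hat\beta$ itself, where the Gaussian factor is maximal, and using boundedness of $p$ to cap the prior's contribution at the unlocated maximizer --- but the packaging is genuinely different and arguably cleaner. The paper fixes $\varepsilon>0$, introduces the ``leverage'' $\kappa=\sup_\beta p(\beta)/\inf_{\beta\in B_\varepsilon(\beta^*)}p(\beta)$, and shows that once $\|\C_n\|$ is small enough every point outside $B_\varepsilon(\hat\beta_n)$ is beaten by $\hat\beta_n$, so the maximizer lands in $B_{2\varepsilon}(\beta^*)$: a qualitative ball-containment argument with two radii. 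You instead extract the explicit non-asymptotic bound
\begin{align*}
\|\tworeg-\hat\beta\|^2 \ \le \ \|\C\|\big(2\log M-2\log p(\hat\beta)\big)
\end{align*}
directly from the optimality inequality $g(\tworeg)\le g(\hat\beta)$ together with the spectral lower bound $v^T\C^{-1}v\ge\|\C\|^{-1}\|v\|^2$, and then pass to the limit by the continuous mapping theorem applied to $p(\hat\beta)\to p(\beta^*)>0$. This buys an explicit rate $\|\tworeg-\hat\beta\|=O_p(\|\C\|^{1/2})$, avoids the $\varepsilon$--$2\varepsilon$ bookkeeping, and makes explicit (via coercivity of $g$) that the maximizer exists, a point the paper leaves implicit. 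All three hypotheses on the prior are used in the same roles in both arguments: boundedness caps $\log p(\tworeg)$, and strict positivity plus continuity control the prior near $\beta^*$ (the paper through the infimum over a ball, you through convergence of $p(\hat\beta)$).
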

\begin{proof}
     Let $\{\hat\beta_n\}_{n=1}^\infty$ be a sequence of instances of the underlying estimator such that $\hat\beta_n\to \beta^*$ in probability. Let $\{\hat\gamma_n\}_{n=1}^\infty$ be a corresponding sequence of normal 2REG estimators assuming $\hat\beta_n|\beta \sim  \NN(\beta,\C_n)$
     with $\C_n\to 0$ as $n\to \infty$. To show $\hat\gamma_n \to \beta^*$ in probability, fix $\delta,\varepsilon>0$ and we find $N$ such that for all $n\geq N$ it holds that $\|\hat\gamma_n-\beta^*\|<2\varepsilon$ with probability at least $1-\delta$, where $\|\beta\|^2=\beta^T\beta$. Indeed, define the local maximal \textit{leverage} of the prior
     \begin{align*}
     \kappa \ = \ \frac{\sup_{\beta} \{ p(\beta) \}}{\inf_{\beta\in B_{\varepsilon(\beta^*)}} \{ p(\beta) \}}
     \end{align*}
     where the closed ball with radius $\varepsilon$ centered at $\beta^*$ is denoted by $B_{\varepsilon}(\beta^*)=\{\beta:\|\beta-\beta^*\|\leq\varepsilon\}$. See that $\kappa\geq 1$ and since $p(\beta)$ is strictly positive, continuous and bounded $\kappa$ is finite. By converegence in probability of $\hat\beta_n$, let $N_1$ be such that for all $n\geq N_1$ we have $\|\hat\beta_n-\beta^*\|<\varepsilon$ with probability at least $1-\delta$. We furthermore bound the operator norm of $\C_n$. By convergence of $\C_n$ to zero, let $N_2$ be such that for all $n\geq N_2$ and all $\beta$ we have $\|\C_n\beta\|\leq \alpha\|\beta\|$ where $\alpha=\varepsilon^2/2\log\kappa$. As $\C_n$ is symmetric and positive definite, this entails $\beta^T\C_n^{-1}\beta\geq \alpha^{-1}\|\beta\|^2$ for all $\beta$. Denote $f_n(\beta)=\exp\{-(\hat\beta_n-\beta)^T\C_n^{-1}(\hat\beta_n-\beta)/2\}$ and see that
     \begin{align*}
         \hat\gamma_n \ = \ \arg{\max}_\beta\  \{ f_n(\beta)\cdot p(\beta) \}.
     \end{align*}
     Suppose $n\geq \max\{N_1,N_2\}$, then with probability at least $1-\delta$ we have $\hat\beta_n\in B_\varepsilon(\beta^*)$ which we use to show \begin{align}\label{section_remap_prop_consistency_target}
         f_n(\hat\beta_n)\cdot p(\hat\beta_n)\ \geq \ f_n(\beta)\cdot p(\beta)
     \end{align} 
     for any $\beta\notin B_\varepsilon(\hat\beta_n)$. This would entail $\hat\gamma_n\in B_\varepsilon(\hat\beta_n)\subseteq B_{2\varepsilon}(\beta^*)$ thus concluding the proof of the proposition. Indeed, $f_n(\hat\beta_n)=1$ while for $\beta\notin B_\varepsilon(\hat\beta_n)$ we have
     \begin{align*}
         -2\log f_n(\beta) \ = \ (\hat\beta_n-\beta)^T\C_n^{-1}(\hat\beta_n-\beta) \ \geq \ \alpha^{-1}\|\hat\beta_n-\beta\|^2 \ \geq \ \alpha^{-1}\varepsilon^2
     \end{align*}
     consequently $f_n(\beta)\leq e^{-\varepsilon^2/2\alpha}$. On the other hand, $p(\beta)/p(\hat\beta_n)\leq \kappa$ by definition of $\kappa$, and we combine to get
     \begin{align*}
         \frac{f_n(\hat\beta_n)}{f_n(\beta)} \ \geq \ e^{\varepsilon^2/2\alpha} \ = \ \kappa \ \geq \ \frac{p(\beta)}{p(\hat\beta_n)}
     \end{align*}
     which entails the desired inequality (\ref{section_remap_prop_consistency_target}) and the proposition is proven.
\end{proof}

We thus conclude our discussion of the properties of 2REG estimators, and divert our attention to estimation of the covariance $\C$.

\section{Covariance Estimation}\label{section_cov}

An estimator of the covariance matrix $\C$ is required for the computation of the normal 2REG estimator, recall definition \ref{section_2reg_defi}. We describe an approach comprising three steps; estimation, regularization and normalization. First, we obtain a crude estimate of the covariance, for example via bootstrap. Indeed, let $\hat\beta^1,..,\hat\beta^B$ be a host of bootstrap plug-in estimates, and use the sample covariance
\begin{align*}
    \hat\C \ = \ \frac{1}{B-1}\sum_{b=1}^B (\hat\beta^b-\bar{\beta})(\hat\beta^b-\bar{\beta})^T
\end{align*}
where $\bar{\beta}=(\hat\beta^1+...+\hat\beta^B)/B$. We note that when resampling data with correlated observations, such as time series, it is advisable to sample continuous blocks \cite{cameron, kreiss}. Alternatively, for a given problem one may devise a specialized covariance estimator. For instance, estimation of the covariance of an OLS estimator, as our time series problem requires, has been studied in the econometric literature under the name of heteroscedasticity and autocorrelation consistent (HAC) estimation, originating from the work of Newey and West \cite{newey_west}. We refer to the introduction of HAC estimators by Zeileis \cite{zeileis} and the references therein. We describe here a simple and robust resampling procedure, akin to Colin Cameron and Miller \cite{cameron}.
\nnline

Partition the data into continuous folds $(\X_1,\y_1),..,(\X_\Omega,\y_\Omega)$. For each fold $\omega=1,..,\Omega$, the OLS estimator $\hat\beta^\omega$ is computed on the combined data of all the folds, apart from the $\omega$-th fold $(\X_\omega,\y_\omega)$. The out-of-sample residual for the $\omega$-th fold is then $\hat\varepsilon_{\omega}=\y_\omega-\X_\omega\hat\beta^\omega$. Use the covariance estimator
\begin{align*}
    \hat\C \ = \ (\X^T\X)^{-1}\bigg(\sum_{\omega=1}^\Omega \X_\omega^T\hat\varepsilon_\omega\hat\varepsilon_\omega^T\X_\omega\bigg)(\X^T\X)^{-1}
\end{align*}
relying on the observation $\ols =\beta+(\X^T\X)^{-1}\X^T\varepsilon$. This resembles a HAC estimator, with out-of-sample residuals obtained by cross validation, and a Lumley-Heagerty-type kernel \cite{zeileis}. We postpone discussing regularization of covariance estimators, and skip directly to the concluding step of normalization.

\nnline

\inlinesection{Normalization}

Recall that the 2REG ridge estimator given in corollary \ref{section_2reg_cor_ols} replaces the penalty matrix $\I$ appearing in standard ridge with $\X^T\X\hat\C$. However, the covariance implied by the OLS pseudo-likelihood (\ref{section_time_series_pseudo_ols}) is in fact $\C=\sigma^2(\X^T\X)^{-1}$, hence standard ridge absorbs the noise variance $\sigma^2$ into the penalty parameter $\lambda$. In order for 2REG ridge to follow suit we normalize the penalty matrix $\X^T\X\hat\C$ by its mean diagonal element, that is we use
$$
    \hat\C_{\text{norm}} \ = \ \hat\C \cdot p /  \tr(\X^T\X\hat\C).
$$
In this case the penalty matrix upholds $\tr(\X^T\X\hat\C_{\text{norm}})=p$ akin to standard ridge. This normalization is indifferent to rescaling of the covariates in $\X$. We now turn to regularization of the covariance estimator.
\nnline

\inlinesection{Regularization}

We detail two approaches of regularizing the covariance estimator, which we ultimately combine; a posteriori estimation and denoising by principal component analysis (PCA). These follow the convex combination approach of Ledoit and Wolf \cite{ledoit_wolf} and Warton \cite{warton}, with a number of changes. First, Warton applies shrinkage to the correlation matrix, leaving the diagonal covariance elements unscathed. In our setting, the variances of the individual entries of the underlying estimator are meaningful, and we wish to regularize them as well, though distinctly from the co-variance components. The second consideration regards to the choice of a prior covariance matrix $\PPi$. In lack of an evident prior, one may use the identity matrix as in Ledoit and Wolf, and Warton. In our case, a natural candidate arises; the covariance implied by the original pseudo-likelihood function. In the OLS case (\ref{section_time_series_pseudo_ols}), this is
\begin{align*}
\PPi \ = \ (\X^T\X)^{-1}\cdot \tr(\hat\C)/\tr((\X^T\X)^{-1})
\end{align*}
scaled to have $\tr(\PPi)=\tr(\hat\C)$, as the sample covariance trace is unbiased \cite{ledoit_wolf}.
\nnline

\inlinesection{A posteriori estimation}

We use a convex combination of the crude estimator and the prior \cite{ledoit_wolf}, as alleged by the following proposition.
\begin{prop}\label{section_cov_prop_convex}
    The following methods are equivalent for combining $\hat\C$ with $\PPi$ as prior.
\begin{enumerate}
    \item A convex combination: $(1-\kappa)\hat\C+\kappa\PPi$ where $0\leq \kappa \leq 1$.
    \item Assuming a Wishart distribution on $\hat\C$ and an inverse-Wishart prior with natural parameter proportional to $\PPi$.
    \item A ridge-type prior: \ $\arg\min_\GGamma \big\{\|\hat\C-\GGamma\|_F^2 + \lambda \|\GGamma-\PPi\|_F^2\big\}$ where $\lambda=\kappa/(1-\kappa)$ and $0\leq\kappa\leq 1$.
\end{enumerate}
\end{prop}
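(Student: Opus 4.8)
The plan is to treat the explicit convex combination in item~1 as the common reference point and verify the other two constructions against it. First I would dispatch the equivalence of items~1 and~3, which is a pure optimization exercise. The objective $\|\hat\C-\GGamma\|_F^2+\lambda\|\GGamma-\PPi\|_F^2$ is a sum of entrywise strictly convex quadratics in $\GGamma$, so it has a unique minimizer characterized by a vanishing gradient. Using $\grad_\GGamma\|\hat\C-\GGamma\|_F^2=-2(\hat\C-\GGamma)$ and $\grad_\GGamma\|\GGamma-\PPi\|_F^2=2(\GGamma-\PPi)$, the stationarity condition $-(\hat\C-\GGamma)+\lambda(\GGamma-\PPi)=0$ gives $\GGamma=(\hat\C+\lambda\PPi)/(1+\lambda)$. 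Substituting $\lambda=\kappa/(1-\kappa)$, so that $1+\lambda=1/(1-\kappa)$, collapses this to $(1-\kappa)\hat\C+\kappa\PPi$, which is exactly item~1. One should note that $\GGamma$ inherits symmetry and positive semidefiniteness from $\hat\C$ and $\PPi$, so the unconstrained minimizer is admissible.

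Next, the equivalence of items~1 and~2 would proceed through conjugacy. I would model $\hat\C\mid\GGamma\sim\text{Wishart}(\GGamma/m,m)$, normalized so that $\E[\hat\C\mid\GGamma]=\GGamma$, and place the conjugate prior $\GGamma\sim\text{IW}(\Psi,\nu)$ with scale $\Psi\propto\PPi$ (the inverse-Wishart scale being, up to a factor of $-\tfrac12$, the natural parameter conjugate to $\GGamma^{-1}$). Viewed as a function of $\GGamma$, the Wishart likelihood is proportional to $|\GGamma|^{-m/2}\exp\{-\tfrac{m}{2}\tr(\GGamma^{-1}\hat\C)\}$; multiplying by the inverse-Wishart prior kernel $|\GGamma|^{-(\nu+p+1)/2}\exp\{-\tfrac12\tr(\Psi\GGamma^{-1})\}$ and collecting terms yields a posterior kernel of the same inverse-Wishart form, with updated scale $\Psi+m\hat\C$ and degrees of freedom $\nu+m$. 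The Bayes estimator under Frobenius (squared-error) loss is the posterior mean $(\Psi+m\hat\C)/(\nu+m-p-1)$, which ties back to item~3 since that loss is precisely the data-fidelity term minimized there. Setting $\Psi=(\nu-p-1)\PPi$ makes the two coefficients sum to one, exhibiting the posterior mean as the convex combination $(1-\kappa)\hat\C+\kappa\PPi$ with $\kappa=(\nu-p-1)/(\nu+m-p-1)$; varying $\nu,m$ over the regularity range $\nu,\nu+m>p+1$ sweeps out every $\kappa\in(0,1)$, with the endpoints arising as degenerate limits.

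I expect the main obstacle to be the bookkeeping in the conjugacy step: keeping the degrees-of-freedom shifts straight (the $-(p+1)$ and $+(p+1)$ terms that distinguish the posterior mean from the mode) and pinning down the proportionality constant relating $\Psi$ to $\PPi$ so that the posterior mean is an \emph{exact} convex combination rather than merely an affine one. I would also remark that an analogous identity holds for the posterior mode $(\Psi+m\hat\C)/(\nu+m+p+1)$ under the rescaling $\Psi=(\nu+p+1)\PPi$, so the equivalence is robust to this modeling choice. A reassuring feature throughout is that the normalizing constants of the Wishart and inverse-Wishart densities never enter the computation, since every step is carried out at the level of kernels in $\GGamma$.
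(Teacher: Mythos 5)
Your proposal is correct and follows the same route as the paper's (much terser) proof: the paper likewise obtains the equivalence of items 1 and 3 by differentiating the ridge objective, and of items 1 and 2 by appealing to Wishart/inverse-Wishart conjugacy and the linearity of the resulting posterior update, citing Anderson for the latter. Your version simply fills in the stationarity computation and the kernel/degrees-of-freedom bookkeeping that the paper delegates to the reference.
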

    The Frobenius norm is given by $\|\GGamma\|_F^2=\tr(\GGamma^T\GGamma)$. 
\begin{proof}[Proof of proposition \ref{section_cov_prop_convex}]
    The inverse-Wishart distribution is conjugate to the Wishart distribution, and their combination is linear in the parameters $\hat\C$ and $\PPi$ \cite{anderson}. Equivalence to the ridge-type prior can be derived by differentiation.
\end{proof}

We thus combine the empirical estimate $\hat\C$ with a prior by
\begin{align}\label{section_estimation_regularization_combination}
    \hat\C_\kappa \ = \ (1-\kappa)\hat\C \ + \ \kappa\PPi
\end{align}
where $0\leq\kappa\leq 1$. When $\kappa=1$, 2REG ridge as in corollary \ref{section_2reg_cor_ols} degenerates back into standard ridge. Thus, the parameter $\kappa$ serves as both to mitigate the degrees of freedom of the covariance estimator, and to scale between standard ridge and 2REG ridge.
\nnline

\inlinesection{PCA denoising}

In extreme cases, the covariance may be so difficult to estimate reasonably, leading to a 2REG estimator too heavily inclined toward the prior $\PPi$. However, the degrees of freedom can be reduced if we use the prior for denoising $\hat\C$ via principal component analysis (PCA). 
\nnline

Indeed, let $\U$ be an orthogonal matrix such that $\U^T\PPi\U$ is diagonal. Then the columns of $\U$ are an orthogonal basis of eigenvectors of $\PPi$. Define the orthogonal projection onto the set of eigenvectors of $\PPi$ by
\begin{align*}
    p_{\PPi}(\hat\C) \ = \ \U\big((\U^T\hat\C\U)\circ\I\big)\U^T
\end{align*}
where $\circ$ is the Hadamard entry-wise product, $(\A\circ\B)_{ij}=\A_{ij}\B_{ij}$. The projection $p_\PPi(\hat\C)$ switches $\hat\C$ to the basis $\U$, eliminates all off-diagonal entries, then switches back to the original basis. This is in fact an orthogonal projection in the space of matrices; $\p_\PPi(\hat\C)$ is the closest symmetric matrix to $\hat\C$ which has the same eigenvectors as $\PPi$, in Frobenius norm. This allows us to penalize the off-diagonal elements in the PCA basis, formally
\begin{align*}
    \argmin{\GGamma} \Big\{ \  \|\hat\C-\GGamma\|_F^2+\lambda\sum_{i\neq j}(\u_i^T\GGamma\u_j)^2 \ \Big\} \ = \ (1-\mu)\hat\C + \mu\cdot  p_\PPi(\hat\C)
\end{align*}
where $\u_1,..,\u_p$ are the columns of $\U$ and $\mu=\lambda/(1+\lambda)$. We combine the two approaches for regularization arriving at the regularized estimator
\begin{align}\label{section_cov_mu_kappa}
    \hat\C_{\mu,\kappa} \ = \ (1-\kappa)\big((1-\mu)\hat\C\ + \ \mu\cdot p_\PPi(\hat\C) \big) \ + \ \kappa \PPi.
\end{align}

\inlinesection{Choosing $\kappa$ and $\mu$}

In order to choose appropriate values $\kappa,\mu$ for a given problem, we score how well they predict the covariance matrix out-of-sample using resampling, for instance with cross validation (see reference within \cite{warton}). To this end, a metric $d(\C,\C')$ between covariance matrices is required. Let $\hat\C$ be the covariance estimate computed on a part of the data, and let $\hat\C_{\text{out}}$ be computed on an independent part of the data. Then, select
\begin{align*}
    \kappa^*,\mu^* \ = \ \argmin{\kappa,\mu} d\big(\hat\C_{\mu,\kappa},\  \hat\C_{\text{out}}\big)
\end{align*}
over many iterations of the resampling algorithm. The metric $d$ can be any norm operating on covariance matrices, for instance a matrix norm, or, considering that $\hat\C$ is assumed to be the covariance of a normal distribution, it may use any metric $d'$ between probability distributions
$$
d(\C, \C') \ = \ d'\big(\NN(\beta,\C),\ \NN(\beta,\C')\big).
$$

\section{Simulation Study}\label{section_sim}

In this section we describe the results of two simulation studies examining the performance of the 2REG ridge estimator in different settings. \nnline

In the first study, one of the covariates has significantly higher autocorrelation than the others. Let $\y=\X\beta+\varepsilon$ where $\X$ in an $n\times p$ matrix and $\y\in\R^n$. In this study we used $p=10$ covariates with $n=2000$ observations. The covariates were sampled independently of each other, from a normal distribution with mean zero and variance one. For each of nine covariates, the observations were sampled independently from one another. The remaining covariate, without loss of generality the first covariate, follows an autoregressive scheme with mean lifetime 10, i.e.\ $\Cov(\X_{i1},\X_{j1})=\pi^{|i-j|}$ for $\pi=\exp(-1/10)$. The $p$ individual effects $\beta=(\beta_1,..,\beta_p)$ were sampled independently from each other and independently from the covariates, from a normal distribution with mean zero and variance one. The residual noise $\varepsilon$ was sampled independently from the covariates and effects, from a normal distribution with mean zero and variance $\sigma^2p$, such that $1/\sigma^2$ can be said to be the \textit{signal-to-noise ratio}. The residual noise follows an autoregressive scheme with mean lifetime 10, i.e. $\Cov(\varepsilon_i,\varepsilon_j)=\sigma^2p\rho^{|i-j|}$ for $\rho=\exp(-1/10)$. We then computed the standard ridge estimator $\hat\beta=(\X^T\X+\lambda\I)^{-1}\X^T\y$, the 2REG ridge estimator $\hat\beta=(\X^T\X+\lambda\X^T\X\hat\C_{\mu,\kappa})^{-1}\X^T\y$ and the correctly specified 2REG ridge estimator $\hat\beta=(\X^T\X+\lambda\X^T\X\C)^{-1}\X^T\y$ for various values of $\lambda\geq 0$ and $0\leq \mu,\kappa\leq 1$. The 2REG ridge estimator uses a covariance estimator $\hat\C_{\mu,\kappa}$ which was computed using the block-bootstrap approach outlined in section \ref{section_cov} with $B=2000$ iterations of resampling $\Omega=20$ continuous non-overlapping equal-sized blocks. We used regularization with $\mu,\kappa$ as in equation (\ref{section_cov_mu_kappa}) using prior covariance proportional to $(\X^T\X)^{-1}$. For each resulting estimator $\hat\beta$ we then computed its squared estimation error $(\hat\beta_1-\beta_1)^2+...+(\hat\beta_p-\beta_p)^2$, e.g.\ for the null estimator $\hat\beta=0$ we expect squared estimation error of $p$. Notice that as the columns of $\X$ and independent and have unit variance, the squared estimation error as above is equal to the expected out of sample prediction error. We repeated this process $N=50000$ independent times, and we report the average and standard error of squared estimation errors over the $N$ tries.
\nnline

We first show the results for $\sigma^2=10$ (table 1), a low signal-to-noise environment, where regularization is imperative. We report results for OLS, and for standard and 2REG ridge with optimal $\lambda\geq 0$, selected as the value of $\lambda$ which gave the lowest squared estimation error averaged over the $N$ tries.

\begin{center}
\begin{threeparttable}
\caption{Estimation error with autocorrelation, low signal-to-noise ($\sigma^2=10$)}
    \begin{tabular}{c||c|c|c|c|c}
        \Xhline{2\arrayrulewidth}
         & $\mu=0$ & $\mu=0.2$ & $\mu=0.4$ & $\mu=0.6$  & $\mu=0.8$ \\
        \textbf{method} & $\kappa=0$ & $\kappa=0.2$ & $\kappa=0.4$ & $\kappa=0.6$  & $\kappa=0.8$ \\ \Xhline{2\arrayrulewidth}
        OLS & 0.953 & \scriptsize N/R & \scriptsize N/R & \scriptsize N/R & \scriptsize N/R \\
        $\lambda=0$ &  (0.0033) &  & & & \\ \hline
        standard ridge & 0.869 & \scriptsize N/R & \scriptsize N/R & \scriptsize N/R & \scriptsize N/R \\
        optimal $\lambda$ & (0.0028) & & & & \\ \hline
        2REG ridge & 0.784 & 0.782 & 0.795 & 0.819 & 0.846 \\
        optimal $\lambda$ & (0.0024) & (0.0024) & (0.0024) & (0.0026) &  (0.0027)\\ \hline
        correctly specified & 0.760 & \scriptsize N/R & \scriptsize N/R & \scriptsize N/R & \scriptsize N/R \\
        2REG ridge & (0.0023) &  &  &  &  \\
        optimal $\lambda$ &  &  &  &  &  \\
        \Xhline{2\arrayrulewidth}
    \end{tabular}
    \begin{tablenotes}
      \footnotesize
      \item The average (standard error) of squared estimation error over $N=50000$ tries is reported. N/R - not relevant.
    \end{tablenotes}
\end{threeparttable}
\end{center}

Here, using 2REG ridge over standard ridge doubles the contribution gained by using standard ridge over OLS. This simple setting allows for covariance shrinkage even as light as $\mu=\kappa=0$ with near-optimal result. Recall that 2REG ridge with $\kappa=1$ is identical to standard ridge. In the same experiment, we report the average $\hat\beta_1^2$ over all repetitions, and show that it is more harshly diminished in 2REG ridge than other elements of $\hat\beta$, as expected (see table 2).
\begin{center}
\begin{threeparttable}
\caption{Magnitude of $\hat\beta_1$ with autocorrelation, $\sigma^2=10$}
    \begin{tabular}{c||c|c|c|c|c}
        \Xhline{2\arrayrulewidth}
         & $\mu=0$ & $\mu=0.2$ & $\mu=0.4$ & $\mu=0.6$  & $\mu=0.8$ \\
        \textbf{method} & $\kappa=0$ & $\kappa=0.2$ & $\kappa=0.4$ & $\kappa=0.6$  & $\kappa=0.8$ \\ \Xhline{2\arrayrulewidth}
        OLS & 1.503 & \scriptsize N/R & \scriptsize N/R & \scriptsize N/R & \scriptsize N/R \\
        $\lambda=0$ &  (10.948) & & & & \\ \hline
        standard ridge & 1.238 & \scriptsize N/R & \scriptsize N/R & \scriptsize N/R & \scriptsize N/R \\
        optimal $\lambda$ & (9.037) & & & & \\ \hline
        2REG ridge & 0.694 & 0.808 & 0.931 & 1.053 & 1.160 \\
        optimal $\lambda$ & (9.223) & (9.089) & (9.026) & (9.010) &  (9.020)\\ \hline
        correctly specified & 0.643 & \scriptsize N/R & \scriptsize N/R & \scriptsize N/R & \scriptsize N/R \\
        2REG ridge & (9.163) &  &  &  &  \\
        optimal $\lambda$ & &  &  &  &  \\
        \Xhline{2\arrayrulewidth}
    \end{tabular}
    \begin{tablenotes}
      \footnotesize
      \item The average of $\hat\beta_1^2$ (in parentheses $\hat\beta_1^2+...+\hat\beta_{p}^2$, $p=10$) over $N=50000$ tries is reported. N/R - not relevant.
    \end{tablenotes}
\end{threeparttable}
\end{center}

In a relatively higher signal-to-noise environment, $\sigma^2=2$ (table 3), the contribution of 2REG ridge is smaller in absolute value, but still retains the relative benefit over using standard ridge.
\begin{center}
 \begin{threeparttable}
\caption{Estimation error with autocorrelation, high signal-to-noise ($\sigma^2=2$)}
       \begin{tabular}{c||c|c|c|c|c}
        \Xhline{2\arrayrulewidth}
         & $\mu=0$ & $\mu=0.2$ & $\mu=0.4$ & $\mu=0.6$  & $\mu=0.8$ \\
        \textbf{method} & $\kappa=0$ & $\kappa=0.2$ & $\kappa=0.4$ & $\kappa=0.6$  & $\kappa=0.8$ \\ \Xhline{2\arrayrulewidth}
        OLS & 0.1907 & \scriptsize N/R & \scriptsize N/R & \scriptsize N/R & \scriptsize N/R \\
        $\lambda=0$ &  (0.0007) & & & & \\ \hline
        standard ridge & 0.1871 & \scriptsize N/R & \scriptsize N/R & \scriptsize N/R & \scriptsize N/R \\
        optimal $\lambda$ & (0.0006) & & & & \\ \hline
        2REG ridge & 0.1821 & 0.1818 & 0.1828 & 0.1844 & 0.1860 \\
        optimal $\lambda$ & (0.0006) & (0.0006) & (0.0006) & (0.0006) &  (0.0006)\\ \hline
        correctly specified & 0.1805 & \scriptsize N/R & \scriptsize N/R & \scriptsize N/R & \scriptsize N/R \\
        2REG ridge & (0.0006) &  &  &  &  \\
        optimal $\lambda$ &  &  &  &  &  \\
        \Xhline{2\arrayrulewidth}
    \end{tabular}
       \begin{tablenotes}
      \footnotesize
      \item The average (standard error) of squared estimation error over $N=50000$ tries is reported. N/R - not relevant.
    \end{tablenotes}
\end{threeparttable}
\end{center}
\nnline

In the second study, one of the covariates is associated with a random effect. The setting in this study was identical to that of the first study, except in the following respects. 
There is no autocorrelation in the covariates and response, i.e.\ $\pi=\rho=0$. Instead, additional noise is added with magnitude proportional to the first covariate, recall proposition \ref{section_time_series_prop_random1} and the discussion preceding it. That is, we have $\y=\X\beta+\eta$ where $\eta_i=\varepsilon_i+\X_{i1}b_i$ for $i=1,..,n$, and it remains to describe $b_1,..,b_n$. Indeed, $b_1,..,b_n$ were sampled randomly from a normal distribution with mean zero following an autoregressive scheme such that $\Cov(b_i,b_j)=\Var(b_1)\tau^{|i-j|}$, compare with proposition \ref{section_time_series_prop_random1}. We used $\tau=exp(-1/100)$ to model a relatively-slowly moving effect, alongside $\Var(b_1)=5$ and $\sigma^2=0.5$, recall $\Var(\varepsilon_1)=\sigma^2p=5$, in order for the noise attributed to the random effect to be a discernible part of the total noise $\eta$, recall proposition \ref{section_time_series_prop_random1}. The results are reported in table 4.

\begin{center}
\begin{threeparttable}
\caption{Estimation error with random effect}
    \begin{tabular}{c||c|c|c|c|c}
        \Xhline{2\arrayrulewidth}
         & $\mu=0$ & $\mu=0.2$ & $\mu=0.4$ & $\mu=0.6$  & $\mu=0.8$ \\
        \textbf{method} & $\kappa=0$ & $\kappa=0.2$ & $\kappa=0.4$ & $\kappa=0.6$  & $\kappa=0.8$ \\ \Xhline{2\arrayrulewidth}
        OLS & 0.506 & \scriptsize N/R & \scriptsize N/R & \scriptsize N/R & \scriptsize N/R \\
        $\lambda=0$ &  (0.0030) & & & & \\ \hline
        standard ridge & 0.483 & \scriptsize N/R & \scriptsize N/R & \scriptsize N/R & \scriptsize N/R \\
        optimal $\lambda$ & (0.0027) & & & & \\ \hline
        2REG ridge & 0.367 & 0.375 & 0.396 & 0.430 & 0.464 \\
        optimal $\lambda$ & (0.0020) & (0.0021) & (0.0021) & (0.0023) &  (0.0025)\\ \hline
        correctly specified & 0.362 & \scriptsize N/R & \scriptsize N/R & \scriptsize N/R & \scriptsize N/R \\
        2REG ridge & (0.0020) &  &  &  &  \\
        optimal $\lambda$ &  &  &  &  &  \\
        \Xhline{2\arrayrulewidth}
    \end{tabular}
    \begin{tablenotes}
      \footnotesize
      \item The average (standard error) of squared estimation error over $N=50000$ tries is reported. N/R - not relevant.
    \end{tablenotes}
\end{threeparttable}
\end{center}

\section{Real Data Study}\label{section_real_data}

In this section we describe the results of a real data study examining the performance of the 2REG ridge estimator on a real-world data set of NYSE stock prices.
\nnline

The data was based on NYSE closing prices of five leading technology stocks; MSFT, AAPL, FB, GOOGL and AMZN between February 8, 2013 and February 7, 2018, obtained from the publicly available Kaggle data set 'S\&P 500 stock prices' by Cam Nugent \cite{kaggle}. The data was partitioned into a training sample, consisting of 982 observations dating up to December 30, 2016, and a test sample consisting of 277 observations dating from January 3, 2017 onward. We fit five linear models with regularization, predicting future prices of each of the five stocks, using past prices of all of the five stocks combined as covariates.
\nnline

The data sets were constructed from the closing prices data, for train and test separately, as follows. We fit five distinct models, each predicting the price return of a different stock ten trading days into the future. Formally, the response for each stock was defined by $\y_i = \log(\p_{i+10}) - \log(\p_i)$ where $\p_i$ is the closing price of the stock at the $i$'th row. All five data sets shared a common set of ten covariates. For each of the five stocks, we included two covariates in $\X$; short (one trading day) and long (five trading day) past price return of that stock. Formally, the short price return covariate was defined by $\x_i^{\text{short}}=\log(\p_i)-\log(\p_{i-1})$ and the long by $\x_i^{\text{long}}=\log(\p_i)-\log(\p_{i-5})$. In these data sets, both $\y$ and the long covariates in $\X$ are autocorrelated due to congruence between proximate price returns. We cannot assume strict exogeneity here, if only for the self price return covariates predicting their future selves. However, contemporaneous exogeneity is usually assumed to apply in this situation \cite{greene}. The fixed or random nature of the effects are unknown. Whenever $\X$ or $\y$ had missing values due to boundary issues, the entire row was dropped, in total the five head rows and ten tail rows were dropped from each data set.
\nnline

We fit a standard ridge estimator $\hat\beta=(\X^T\X+\lambda\I)^{-1}\X^T\y$ and a 2REG ridge estimator $\hat\beta=(\X^T\X+\lambda\X^T\X\hat\C_{\mu,\kappa})^{-1}\X^T\y$ for various values of $\lambda\geq 0$. The 2REG ridge estimator was computed using the block bootstrap approach outlined in section \ref{section_cov}, with $B=2000$ iterations of resampling $\Omega=10$ continuous non-overlapping blocks of roughly equal size. We applied covariance regularization with $\mu,\kappa$ as in equation (\ref{section_cov_mu_kappa}) using prior covariance proportional to $(\X^T\X)^{-1}$. The values of $\mu,\kappa$ were selected from a grid so as to best predict the covariance matrix out-of-sample, in Frobenius norm, recall section \ref{section_cov}. For this, we used cross validation over the same folds previously specified and the same amount of bootstrap iterations were used to compute the covariance matrices. The out-of-sample covariances used for cross validation were not regularized.
\nnline

For each of the estimators we computed its out-of-sample predictions on the test data, and we report $r^2$ values over all five data sets predicting different stocks combined. The $r^2$ values were calculated against the null model $\hat\beta=0$. In figure 5, we plot out-of-sample $r^2$ values along different ridge $\lambda$ values, for standard and 2REG ridge.

\begin{figure}[H]
    \centering
    Figure 5: Out-of-sample $r^2$ value on NYSE stock price data
    \includegraphics[scale=0.5]{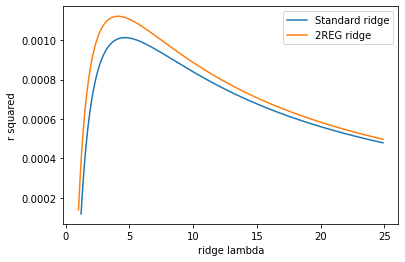}    
\end{figure}

See that standard and 2REG ridge are similarly scaled due to the covariance estimator normalization, recall section \ref{section_cov}. Note that $r^2$ values are very small, as is typical for this kind of data. The standard ridge estimator attains at most $r^2=0.00101$, while 2REG ridge has $r^2=0.00112$ at its peak. Without covariance regularization, 2REG ridge yields negative $r^2$ out of sample.

\begin{appendix}

\section{Proofs for Time Series Examples}\label{appendix_time_series}

In this appendix we provide proofs for the statements made in section \ref{section_time_series}.

\begin{proof}[Proof of lemma \ref{section_time_series_lemma}]
    Denote $\hat\beta=\hat\beta^n$, $\X=(\X_1,..,\X_n)$ and $\varepsilon=(\varepsilon_1,..,\varepsilon_n)$, permitting the abuse in disregarding the dimension $n$. As $\varepsilon$ has mean zero conditioned upon $\X$ we have
    \begin{align}\label{appendix_time_series_lemma_var}
        n\Var(\hat\beta) \ = \ n\cdot \E\bigg[\frac{\X^T\varepsilon\varepsilon^T\X}{(\X^T\X)^2}\bigg] \ = \ \E\bigg[\frac{\X^T\SSigma_n\X/n}{(\X^T\X/n)^2}\bigg].
    \end{align}
  
    By the assumptions of the lemma $(\X^T\X/n)^2\to1$ in probability, whereas the numerator $\{\X^T\SSigma_n\X/n\}_{n=1}^\infty$ in (\ref{appendix_time_series_lemma_var}) converges in probability as well and we shall compute the limit. See that
    \begin{align}\label{appendix_time_series_lemma_numerator}
    \E[\X^T\SSigma_n\X] \ = \ \sum_{i,j=1}^n \E[(\SSigma_n)_{ij}\X_i\X_j] \ = \ \sum_{i,j=1}^n\Cov(\X_i\varepsilon_i,\ \X_j\varepsilon_j)
    \end{align}
    and we denote this quantity by $V_n=\E[\X^T\SSigma_n\X]$.
    As $\{\X_n\}_{n=1}^\infty$ have uniformly bounded fourth moments, the sequence $\{\X^T\X/n\}_{n=1}^\infty$ is uniformly integrable \cite{gut}. Since $\{\SSigma_n\}_{n=1}^\infty$ is uniformly bounded in operator norm there exists $M>0$ such that $\X^T\SSigma_n\X\leq M\X^T\X$, hence the sequence $\{\X^T\SSigma_n\X/n\}_{n=1}^\infty$ is uniformly integrable as well. 
    Therefore, by convergence by uniform integrability \cite{gut} and by (\ref{appendix_time_series_lemma_numerator}), the limit is
    $$
    \frac{\X^T\SSigma_n\X}{n} \ \overset{p}{\longrightarrow} \ \lim_{n\to\infty} \E\bigg[\frac{\X^T\SSigma_n\X}{n}\bigg] \ = \ \lim_{n\to\infty} \frac{1}{n} V_n.
    $$
    Consequently, the integrand in (\ref{appendix_time_series_lemma_var}) converges in probability to
    \begin{align}\label{appendix_time_series_lemma_plim}
    \frac{\X^T\SSigma_n\X/n}{(\X^T\X/n)^2} \ \overset{p}{\longrightarrow} \ \lim_{n\to\infty} \frac{1}{n}V_n.
    \end{align}
    It remains to apply convergence by uniform integrability again, this time on the variance formula (\ref{appendix_time_series_lemma_var}) to obtain that the limit variance is the probability limit of the integrand. Indeed, using $M$ as before we have
    \begin{align*}
        \frac{\X^T\SSigma_n\X/n}{(\X^T\X/n)^2} \ \leq \ \frac{M}{\X^T\X/n}.
    \end{align*}
    By assumption, $\{ n(\X^T\X)^{-1}\}_{n=1}^\infty$ are uniformly integrable hence the integrand in (\ref{appendix_time_series_lemma_var}) constitutes a uniformly integrable sequence as well, and thus by convergence by uniform integrability and by (\ref{appendix_time_series_lemma_var}) and (\ref{appendix_time_series_lemma_plim}) we obtain
    $$
    n\Var(\hat\beta) \ = \ \E\bigg[\frac{\X^T\SSigma_n\X/n}{(\X^T\X/n)^2}\bigg] \ \longrightarrow \ \lim_{n\to\infty} \frac{1}{n}V_n
    $$
    as required.
\end{proof}

\begin{proof}[Proof of proposition \ref{section_time_series_prop_ar}]
    Following lemma \ref{section_time_series_lemma}, see that
    \begin{align*}
        \Cov(\X_i\varepsilon_i,\ \X_j\varepsilon_j) \ = \ \E[\X_i\X_j\varepsilon_i\varepsilon_j] \ = \ \E[\X_i\X_j\E[\varepsilon_i\varepsilon_j|\X_i\X_j]] \ = \ \sigma^2(\pi\rho)^{|i-j|}.
    \end{align*}
    Hence,
    $$
    \frac{1}{n}\sum_{i,j=1}^n\Cov(\X_i\varepsilon_i,\ \X_j\varepsilon_j) \ = \ \frac{\sigma^2}{n}\sum_{i,j=1}^n(\pi\rho)^{|i-j|} \ = \ \frac{\sigma^2}{n}\sum_{\omega=-n}^n (n-|\omega|)(\pi\rho)^{|\omega|}.
    $$
    We now show that for any $0\leq a<1$ we have
    \begin{align}\label{appendix_time_series_prop_ar_full}
    \sum_{\omega=-n}^n\bigg(1-\frac{|\omega|}{n}\bigg) a^{|\omega|} \ \overset{n\to\infty}{\longrightarrow} \ \frac{1+a}{1-a}
    \end{align}
    and by lemma \ref{section_time_series_lemma} this will suffice to prove the proposition. Indeed,
    \begin{align}\label{appendix_time_series_prop_ar_part1}
        \sum_{\omega=-n}^n a^{|\omega|} \ = \ -1 + 2\sum_{\omega=0}^n a^\omega \ = \ -1 + \frac{2(1-a^{n+1})}{1-a} \ \overset{n\to\infty}{\longrightarrow} \ \frac{1+a}{1-a}.
    \end{align}
    Likewise,
    \begin{align}\label{appendix_time_series_prop_ar_part2}
        \frac{1}{n}\sum_{\omega=-n}^n |\omega|a^{|\omega|} \ \leq \ \frac{2}{n}\sum_{\omega=0}^\infty\omega a^\omega \ = \ \frac{2a}{n(1-a)^2} \ \overset{n\to\infty}{\longrightarrow} \ 0.
    \end{align}
    Combining (\ref{appendix_time_series_prop_ar_part1}) with (\ref{appendix_time_series_prop_ar_part2}) we get (\ref{appendix_time_series_prop_ar_full}).
\end{proof}

\begin{proof}[Proof of proposition \ref{section_time_series_prop_random1}]
    By lemma \ref{section_time_series_lemma},
    \begin{align}\label{section_time_series_prop_random1_goal}
        \lim_{n\to\infty} n\Var(\hat\beta_n) \ = \ \lim_{n\to\infty} \frac{1}{n}\sum_{i=1}^\infty \Cov\big(\X_i(\varepsilon_i+\X_ib_i), \ \X_j(\varepsilon_j +\X_jb_j)\big).
    \end{align}
    Denote $V_{ij}=\Cov(\X_i(\varepsilon_i+\X_ib_i), \ \X_j(\varepsilon_j +\X_jb_j))$.
    As all variables have mean zero and as $\{b_i\}_{i=1}^\infty$, $\{\X_i\}_{i=1}^\infty$, $\{\varepsilon_i\}_{i,j=1}^\infty$ are independent of each other, we have
    \begin{align*}
        V_{ij} \ &=  \ \E[\X_i\X_j(\varepsilon_i+\X_ib_i)(\varepsilon_j+\X_jb_j)]
        \\ 
        &= \ \E[\X_i\X_j]\cdot\E[\varepsilon_i\varepsilon_j] \ + \ \E[\X_i^2\X_j^2]\cdot\E[b_ib_j]
        \\
        &= \ \delta_{ij}\cdot(\sigma^2 + \E[\X_i^4]\cdot\Var(b_i)) \ + \ (1-\delta_{ij})\cdot\Cov(b_i,b_j)
    \end{align*}
    where $\delta_{ij}$ equals one when $i=j$ and zero otherwise. As $\{\X_i\}_{i=1}^\infty$ are identically distributed we have $\E[\X_i^4]=\E[\X_1^4]$ for all $i=1,..,n$. Furthermore, $\Cov(b_i,b_j)=\Var(b_1)\tau^{|i-j|}$ and in particular $\Var(b_i)=\Var(b_1)$. Then, by (\ref{section_time_series_prop_random1_goal}) we get
    $$
    \lim_{n\to\infty} n\Var(\hat\beta_n) \ = \ \sigma^2 + \Var(b_1)\cdot\bigg(\E[\X_1^4]-1 +\lim_{n\to\infty} \frac{1}{n} \sum_{i,j=1}^n \tau^{|i-j|}\bigg).
    $$
    In the proof of proposition \ref{section_time_series_prop_ar} we computed
    $$
    \lim_{n\to\infty} \frac{1}{n} \sum_{i,j=1}^n \tau^{|i-j|} \ = \ \frac{1+\tau}{1-\tau}
    $$
    and this concludes the proof of the proposition.
\end{proof}

\begin{proof}[Proof of proposition \ref{section_time_series_prop_random2}]
     By lemma \ref{section_time_series_lemma},
    \begin{align}\label{section_time_series_prop_random2_goal}
        \lim_{n\to\infty} n\Var(\hat\beta_n) \ = \ \lim_{n\to\infty} \frac{1}{n}\sum_{i,j=1}^\infty \Cov\big(\X_i(\varepsilon_i+\Z_ib_i), \ \X_j(\varepsilon_j +\Z_jb_j)\big).
    \end{align}
    Denote $V_{ij}=\Cov(\X_i(\varepsilon_i+\X_ib_i), \ \X_j(\varepsilon_j +\X_jb_j))$.
    As all variables have mean zero and as $\{b_i\}_{i=1}^\infty$, $\{\X_i\}_{i=1}^\infty$, $\{\Z_i\}_{i=1}^\infty$, $\{\varepsilon_i\}_{i=1}^\infty$ are independent of each other, we have
    \begin{align*}
        V_{ij} \ &=  \ \E[\X_i\X_j(\varepsilon_i+\Z_ib_i)(\varepsilon_j+\Z_jb_j)]
        \\ 
        &= \ \E[\X_i\X_j]\cdot\E[\varepsilon_i\varepsilon_j] \ + \ \E[\X_i\X_j]\cdot\E[\Z_i\Z_j]\cdot\E[b_ib_j]
        \\
        &= \ \delta_{ij}\cdot(\sigma^2\ +\ \Var(b_1)).
    \end{align*}
    where $\delta_{ij}$ equals one when $i=j$ and zero otherwise. By (\ref{section_time_series_prop_random2_goal}), the proposition is proven.
\end{proof}

\section{Monotonicity of Ridge}\label{appendix_monotone}

In this appendix we prove proposition \ref{section_2reg_prop_monotone}.

\begin{proof}[Proof of proposition \ref{section_2reg_prop_monotone}]
    See that $\hat\beta_{\lambda}  = (\I+\lambda\LLambda)^{-1}\ols$ and consequently
    \begin{align}\label{appendix_monotone_cov}
        \Cov(\hat\beta_{\lambda}) \ = \ (\I+\lambda\LLambda)^{-1}\C(\I+\lambda\LLambda^T)^{-1}
    \end{align}
    where $\C=\Cov(\ols)$. The covariance $\Cov(\hat\beta_{\lambda})$ is monotone decreasing in $\lambda\geq 0$ in the sense of positive definite matrices, if and only if its inverse $\Cov(\hat\beta_{\lambda})^{-1}$ is increasing \cite{horn_johnson}. Inverting, we get
\begin{align}\label{appendix_monotone_cov_inv}
    \Cov(\hat\beta_{\lambda})^{-1} \ = \ \C^{-1} \ + \ \lambda\big(\LLambda^T\C^{-1} + \C^{-1}\LLambda\big) \ + \ \lambda^2\LLambda^T\C^{-1}\LLambda.
\end{align}

The matrix $\LLambda^T\C^{-1}\LLambda$ in (\ref{appendix_monotone_cov_inv}) is symmetric and positive semidefinite, as $\C$ is. If the symmetric matrix $\LLambda^T\C^{-1}+\C^{-1}\LLambda$ is positive semidefinite as well, then $\Cov(\hat\beta_{\lambda})^{-1}$ is monote increasing for as $\lambda$ increases, a positive semidefinite matrix is added. Conversely,
$$
\frac{d}{d\lambda}\bigg|_{\lambda=0}\Cov(\hat\beta_{\lambda})^{-1} \ = \ \LLambda^T\C^{-1}+\C^{-1}\LLambda
$$
therefore whenever $\LLambda^T\C^{-1}+\C^{-1}\LLambda$ is not positive semidefinite, $\Cov(\hat\beta_{\lambda})^{-1}$ cannot be monotone increasing.
\end{proof}

\end{appendix}

\end{document}